\theoremstyle{definition}
\newtheorem{corollary}{Corollary}
\newtheorem{lemma}[corollary]{Lemma}
\newtheorem{theorem}[corollary]{Theorem}
\newtheorem{problem}{Problem}
\newtheorem{conjecture}{Conjecture}
\begin{document}

\AtEndDocument{%
  \par
  \medskip
  \begin{tabular}{@{}l@{}}%
    \textsc{Gabriel Coutinho}\\
    \textsc{Dept. of Computer Science}\\
    \textsc{Universidade Federal de Minas Gerais, Brazil}\\
    \textit{E-mail address}: \texttt{gabriel@dcc.ufmg.br} \\ \ \\
    \textsc{Chris Godsil}\\
    \textsc{Dept. of Combinatorics and Optimization}\\
    \textsc{University of Waterloo, Canada}\\
    \textit{E-mail address}: \texttt{cgodsil@uwaterloo.ca} \\ \ \\
    \textsc{Emanuel Juliano}\\
    \textsc{Dept. of Computer Science}\\
    \textsc{Universidade Federal de Minas Gerais, Brazil}\\
    \textit{E-mail address}: \texttt{emanueljulianoms@gmail.com}\\ \ \\
    \textsc{Christopher M.\ van Bommel}\\
    \textsc{Dept. of Mathematics}\\
    \textsc{University of Manitoba, Canada}\\
    \textit{E-mail address}: \texttt{Christopher.vanBommel@umanitoba.ca}
  \end{tabular}}

\title{Quantum walks do not like bridges}
\author{Gabriel Coutinho \and Chris Godsil \and Emanuel Juliano \and Christopher M.\ van Bommel}
\date{\today}
\maketitle
\vspace{-0.8cm}

\begin{abstract}
       We consider graphs with two cut vertices joined by a path with one or two edges, and prove that there can be no quantum perfect state transfer between these vertices, unless the graph has no other vertex. We achieve this result by applying the 1-sum lemma for the characteristic polynomial of graphs, the neutrino identities that relate entries of eigenprojectors and eigenvalues, and variational principles for eigenvalues (Cauchy interlacing, Weyl inequalities and Wielandt minimax principle). We see our result as an intermediate step to broaden the understanding of how connectivity plays a key role in quantum walks, and as further evidence of the conjecture that no tree on four or more vertices admits state transfer. We conclude with some open problems.
\end{abstract}

\begin{center}
\textbf{Keywords}
\end{center}

\textsc{quantum walks; state transfer; graph 1-sum; interlacing}

\section{Introduction}

Let $X$ be a graph, understood to model a network of interacting qubits. Upon certain initial setups for the system, the time evolution is determined by the matrix
\[
	U(t) = \exp(\ii t A),
\]
where $t \in \Rds_+$ and $A = A(X)$, the adjacency matrix of $X$. In this paper we choose to use the bra-ket notation: a vertex $a$ of the graph is represented by a $01$-characteristic vector $\ket a$. The dual functional is denoted by $\bra a$. We say that $X$ admits \textit{perfect state transfer} between $a$ and $b$ at time $t$ if
\[
	\big|\bra{b} U(t) \ket{a} \big| = 1.
\]
For an introduction to the topic we recommend \cite{CoutinhoPhD}.

Quantum perfect state transfer is a desirable phenomenon for several applications in quantum information and yet it is difficult to obtain. Path graphs on $2$ and $3$ vertices admit it, but no other \cite{ChristandlPSTQuantumSpinNet2}, and no other tree is known to achieve it \cite{CoutinhoLiu2}. The infinite families of graphs known to admit state transfer all have an exponential growth compared to the distance between the two vertices involved, while cost constraints in building quantum networks suggest that the desirable configurations should have polynomial growth \cite{KayLimbo}.

Upon allowing for edge weights, it is possible to achieve state transfer on paths, but again, the known families (see for instance \cite{VinetZhedanovHowTo,VinetZhedanovDualHahnPols}) require large weights on the centre of the chain. A question raised in the literature \cite{Casaccino} asked if it was possible to achieve state transfer on a path modulating the weights of loops placed at the extremes of the chain only. In \cite{LippnerPotential}, this was answered in the negative. Our investigation in this paper is related to theirs and in some sense slightly more general: we connect two vertices by a path, and ask if a graph can be used to decorate each end of this chain so that the state transfer happens between the two vertices. We answer this question partially for when the path has one or two edges, also in the negative. We use several standard techniques from linear algebra, some of which not yet used in the context of quantum walks to the best of our knowledge, thus bringing perhaps new inspiration for future research.

In Section \ref{sec:prelim} we state all known results we use in this paper for the convenience of the reader. In Section \ref{sec:strongcut} we show a new result that lays the groundwork for our further analysis. In Sections \ref{sec:p2} and \ref{sec:p3} we prove that state transfer does not happen when the two special vertices and the graph between them induces $P_2$ and $P_3$, respectively in each section. In Section \ref{sec:final} we list open problems and future lines of investigation.

\section{Preliminaries} \label{sec:prelim}

Assume we have a graph $Z$ with two cut vertices $a$ and $b$, just like the figure below.

\begin{figure}[h] 
    \centering
	\includegraphics[scale=0.5]{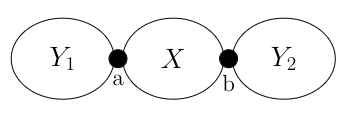}
    \caption{Graph with two cut vertices, called $Z$.} \label{figure1}
\end{figure}

Our goal is to show that if $X$ is $P_2$ or $P_3$, then perfect state transfer does not happen between $a$ and $b$, unless of course $Y_1$ and $Y_2$ are trivial graphs containing only one vertex.

\subsection{State transfer}

Given a graph $X$ on $n$ vertices with adjacency matrix $A$, we assume the spectral decomposition of $A$ is denoted by
\[
	A = \sum_{r = 0}^d \theta_r E_r,
\]
thus we assume there are $d+1$ distinct eigenvalues $\theta_r$, with corresponding eigenprojectors $E_r$. We assume the graph is connected, $\theta_0$ is the largest eigenvalue, and thus $E_0$ is a matrix with positive entries (see \cite[Section 2.2]{BrouwerHaemers}). Then
\[
	U(t) = \exp(\ii t A) = \sum_{r = 0}^d \e^{\ii t \theta_r} E_r,
\]
and it is immediate to verify that, for $a,b \in V(X)$, there is $t$ so that ${|\bra b U(t) \ket a| = 1}$ if and only if there is $\gamma \in \Cds$ with $|\gamma| = 1$ so that $U(t) \ket a = \gamma \ket b$. This equation is equivalent to having, for all $r \in \{0,\cdots, d\}$, 
\[
	\e^{\ii t \theta_r}E_r \ket a = \gamma E_r \ket b,
\]
which is then equivalent to having, simultaneously, for all $r$,
\begin{enumerate}[(a)]
	\item $E_r \ket a = \sigma_r E_r \ket b$, with $\sigma_r \in \{-1,+1\}$, and
	\item whenever $E_r \ket a \neq 0$, then $t (\theta_0 - \theta_r) = k_r \pi$, with $k_r \in \Zds$, and moreover ${k_r \equiv (1-\sigma_r)/2 \pmod 2}$.
\end{enumerate}
Two vertices for which condition (a) holds are called \textit{strongly cospectral}. Note that it implies $\bra a E_r \ket a = \bra b E_r \ket b$ for all $r$, which is the weaker more well known condition that the vertices are cospectral. It is immediate to verify that cospectral vertices satisfy $\bra a A^k \ket a = \bra b A^k \ket b$ for all $k$, and therefore they must have the same degree.

Eigenvalues $\theta_r$ for which $E_r \ket a \neq 0$ are said to belong to the \textit{eigenvalue support} of $a$.
 
Godsil showed that condition (b) above implies that the eigenvalues are either integers or quadratic integers of a special form \cite{GodsilPerfectStateTransfer12}, and from this we obtain the following characterization of perfect state transfer (see for instance \cite[Chapter 2]{CoutinhoPhD}).

\begin{theorem} \label{thm:pstcha}
	Let $X$ be a graph, and let $a,b \in V(X)$. There is perfect state transfer between $a$ and $b$ at time $t$ if and only if all conditions below hold.
	\begin{enumerate}[(a)]
		\item $E_r \ket a = \sigma_r E_r \ket b$, with $\sigma_r \in \{-1,+1\}$.
		\item There is an integer $\alpha$, a square-free positive integer $\Delta$ (possibly equal to 1), so that for all $\theta_r$ in the support of $a$, there is $\beta_r$ giving
			\[
				\theta_r = \frac{\alpha + \beta_r \sqrt{\Delta}}{2}.
			\]
			In particular, because $\theta_r$ is an algebraic integer, it follows that all $\beta_r$ have the same parity as $a$.
		\item There is $g \in \Zds$ so that, for all $\theta_r$ in the support of $a$, $(\beta_0 - \beta_r)/g = k_r$, with $k_r \in \Zds$, and ${k_r \equiv (1-\sigma_r)/2 \pmod 2}$.
	\end{enumerate}
	If the conditions hold, then the positive values of $t$ for which perfect state transfer occurs are precisely the odd multiples of $\pi/(g \sqrt{\Delta})$.
\end{theorem}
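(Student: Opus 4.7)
The plan is to derive the theorem from the equivalence already established in the preceding paragraphs, combined with Godsil's structural result \cite{GodsilPerfectStateTransfer12} on the form of eigenvalues compatible with perfect state transfer. The preliminary analysis gives that PST at time $t$ is equivalent to: (a) strong cospectrality $E_r\ket a = \sigma_r E_r\ket b$ for all $r$, together with (b$'$) the phase condition $t(\theta_0-\theta_r) = k_r \pi$ with $k_r \in \Zds$ and $k_r \equiv (1-\sigma_r)/2 \pmod 2$ for each $\theta_r$ in the support of $a$. Condition (a) of the theorem is just (a) restated, so the whole content of the proof is to unpack (b$'$) into (b) and (c).

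For the forward direction, fix a $t$ witnessing PST. Condition (b$'$) immediately forces every ratio $(\theta_0-\theta_r)/(\theta_0-\theta_s)$ to be rational for $r,s$ in the eigenvalue support of $a$. Because $A$ is an integer matrix, each $\theta_r$ is an algebraic integer; combining this with the rational commensurability and applying Godsil's theorem \cite{GodsilPerfectStateTransfer12} yields that the support eigenvalues all lie in a single quadratic extension $\mathbb{Q}(\sqrt{\Delta})$ (with $\Delta$ squarefree, possibly $1$), and share a common ``rational part'': there exist an integer $\alpha$ and a squarefree positive integer $\Delta$ such that $\theta_r = (\alpha+\beta_r\sqrt{\Delta})/2$ for each $\theta_r$ in the support. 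The requirement that the $\beta_r$ share the parity of $\alpha$ follows from the standard description of the ring of integers of $\mathbb{Q}(\sqrt{\Delta})$: $\theta_r$ being an algebraic integer in this field places $(\alpha,\beta_r)$ on the ring-of-integers lattice. This delivers condition (b).

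Substituting the representation from (b) into (b$'$) gives $t(\beta_0-\beta_r)\sqrt{\Delta}/2 = k_r \pi$, or equivalently $k_r = t\sqrt{\Delta}\,(\beta_0-\beta_r)/(2\pi)$. Choosing $g \in \Zds$ so that $k_r := (\beta_0-\beta_r)/g$ is an integer with parity matching $(1-\sigma_r)/2$ for every $r$ in the support (this is possible precisely because of the arithmetic forced by (b$'$)) puts condition (c) in place. Arithmetic manipulation of the time equation, treating the compatibility of the single value of $t$ across all $r$ in the support, then identifies the set of admissible $t$'s as exactly the odd multiples of $\pi/(g\sqrt{\Delta})$. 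Conversely, given (a), (b), (c), plugging any odd multiple of $\pi/(g\sqrt{\Delta})$ back into the phase equation verifies (b$'$), and combined with (a) this reconstructs PST by running the derivation in the preliminaries in reverse.

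The main obstacle is the middle step: invoking Godsil's theorem to obtain the uniform quadratic-field structure $\theta_r = (\alpha+\beta_r\sqrt{\Delta})/2$ with a \emph{single} pair $(\alpha,\Delta)$ serving the entire eigenvalue support. This uses both the algebraic integrality of the $\theta_r$ and the Galois invariance of the characteristic polynomial of $A$ (a consequence of $A$ having integer entries), together with the rational commensurability from (b$'$), to rule out support eigenvalues lying in genuinely distinct number fields. Once this structural fact is imported, the remainder of the proof is a straightforward accounting exercise on the $\beta_r$, the integer $g$, and the corresponding admissible times.
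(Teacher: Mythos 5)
The paper does not actually prove this theorem---it is imported from the literature (Godsil's ratio-condition result \cite{GodsilPerfectStateTransfer12} together with \cite{CoutinhoPhD})---and your proposal follows precisely the derivation sketched in the paper's preliminaries: reduce PST to strong cospectrality plus the phase condition $t(\theta_0-\theta_r)=k_r\pi$ with $k_r\equiv(1-\sigma_r)/2 \pmod 2$, invoke Godsil's theorem to force all support eigenvalues into a single quadratic form $(\alpha+\beta_r\sqrt{\Delta})/2$, and then do arithmetic on the $\beta_r$ to extract $g$ and the admissible times. This is the intended proof and is essentially correct; the only thin spot is the last step, where the existence of a suitable $g$ and the ``odd multiples'' description of the times are asserted rather than derived (one must write $t\sqrt{\Delta}/(2\pi)$ as a reduced fraction $p/q$, take $g=q$, and argue $p$ is odd because otherwise every $\sigma_r=+1$ and hence $\ket a=\ket b$), but this is routine bookkeeping rather than a gap in the approach.
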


\subsection{1-sum lemma}

In this paper, we will investigate perfect state transfer between cut-vertices. Fortunately, there is a very simple recurrence for the characteristic polynomial of a graph in terms of those of some of its subgraphs when a cut-vertex is deleted. This result is likely due to Schwenk (see for instance \cite[Corollary 2b]{SchwenkComputing}). We shall use $\phi(X;t)$ to the denote the characteristic polynomial of the graph $X$ on the variable $t$.

Suppose $X$ and $Y$ are disjoint graphs, and let $Z$ be the graph obtained by identifying a vertex of $X$ with a vertex of $Y$. We say that $Z$ is a $1$-sum of $X$ and $Y$ at the identified vertex.

\begin{lemma} \label{1sum}
	If $Z$ is the $1$-sum of $Y_1$ and $Y_2$ at $b$, then
	\[
		\phi(Z;t) = \phi(Y_1;t)\phi(Y_2\backslash b;t)+\phi(Y_1\backslash b;t)\phi(Y_2;t) - t \phi(Y_1\backslash b;t)\phi(Y_2\backslash b;t).
	\]
\end{lemma}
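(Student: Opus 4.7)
The plan is to exploit the block structure of the adjacency matrix of $Z$ when $b$ is placed first in the ordering. With $A_i = A(Y_i \setminus b)$ and $u_i$ the $0/1$ column vector recording the neighbours of $b$ in $Y_i$, the adjacency matrix of $Z$ has the block form
\[
	A(Z) = \begin{pmatrix} 0 & u_1^T & u_2^T \\ u_1 & A_1 & 0 \\ u_2 & 0 & A_2 \end{pmatrix},
\]
and I would compute $\phi(Z;t) = \det(tI - A(Z))$ by a Schur complement expansion against the bottom-right block $(tI-A_1) \oplus (tI-A_2)$.

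The first step is to pin down the scalar $u_i^T (tI - A_i)^{-1} u_i$ in terms of $\phi(Y_i;t)$ and $\phi(Y_i \setminus b;t)$. Applying the same Schur complement identity to the single graph $Y_i$ with $b$ first gives
\[
	\phi(Y_i;t) = \phi(Y_i \setminus b; t) \cdot \bigl( t - u_i^T (tI - A_i)^{-1} u_i \bigr),
\]
so $u_i^T(tI-A_i)^{-1}u_i = t - \phi(Y_i;t)/\phi(Y_i\setminus b;t)$, valid whenever $tI - A_i$ is invertible.

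Next I would carry out the Schur complement for $Z$ itself. Since the off-diagonal blocks of the bottom-right $2\times 2$ block matrix are zero, the induced bilinear form splits as a sum, and I obtain
\[
	\phi(Z;t) = \phi(Y_1 \setminus b;t)\,\phi(Y_2 \setminus b;t) \cdot \Bigl( t - u_1^T(tI-A_1)^{-1}u_1 - u_2^T(tI-A_2)^{-1}u_2 \Bigr).
\]
Substituting the two expressions from the previous step, the $t$ terms collapse and the identity in the statement of the lemma drops out after a line of algebra.

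The only real caveat is that the Schur complement step requires $tI - A_i$ to be invertible, which fails at the eigenvalues of $A_i$. However, both sides of the claimed identity are polynomials in $t$ and they agree on all but finitely many values, so they coincide as polynomials; this is the step I expect to mention explicitly but which presents no actual obstacle. An alternative, more combinatorial route would be to expand $\det(tI - A(Z))$ along the row indexed by $b$ using cofactors and group the terms according to whether the permutation fixes $b$ or swaps $b$ with a neighbour in $Y_1$ or $Y_2$; this avoids invertibility but is messier, so I would prefer the Schur complement argument.
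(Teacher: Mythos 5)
Your proof is correct, and it takes a genuinely different route from the paper. The paper argues combinatorially with walk generating functions: it introduces $W_b$ (closed walks at $b$) and $C_b$ (first-return walks at $b$), observes that a first-return walk at a cut vertex cannot cross $b$, so $C_b(Z) = C_b(Y_1) + C_b(Y_2)$, and then translates back to characteristic polynomials via the adjugate identity $t^{-1}W_b(X;t^{-1}) = \phi(X\setminus b;t)/\phi(X;t)$. Your Schur complement computation is the linear-algebraic shadow of the same fact: the quantity $t - u_i^T(tI-A_i)^{-1}u_i = \phi(Y_i;t)/\phi(Y_i\setminus b;t)$ is exactly $1 - C_b$ in disguise, and the additivity you get from the block-diagonal structure of $(tI-A_1)\oplus(tI-A_2)$ is the paper's decomposition of first-return walks. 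Your version is shorter and fully rigorous as stated (the invertibility caveat is correctly dispatched by the polynomial-identity argument, and one should also note the degenerate case $Y_i = K_1$ where $u_i$ is empty and the formula still holds with $\phi(Y_i\setminus b;t)=1$); the paper's version buys a combinatorial explanation of why the cut vertex is what makes the formula work, and it sets up the generating-function identities that are reused throughout the rest of the paper.
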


Because this result is perhaps not so well known, we present its proof (which is different from the original proof in Schwenk's work).

\begin{proof}
	Let $W_{a}(X;t)$ be the walk generating function for the closed walks that start and end at vertex $a$ (thus, the coefficient of $x^k$ counts the number of closed walks that start and end at $a$ after $k$ steps). Note that 
	\[W_{a}(X;t) = \left(\sum_{k \geq 0} A^k x^k \right)_{a,a} = (I - xA)^{-1}_{a,a}.\] 
	From the adjugate expression for the inverse, it follows that
	\begin{align}\label{eq:adju}
		t^{-1}W_{a}(X;t^{-1}) = \frac{\phi(X \backslash a;t)}{\phi(X;t)}.
	\end{align}
	Let now $C_{a}(X;t)$ be the walk generating function for the closed walks that start and end at vertex $a$ but return to $a$ only at the final step. Any walk that starts and ends at $a$ can be decomposed into a walk that starts and ends at $a$, followed by another that starts at $a$ and returns exactly once. Thus
	\[
    	W_{a}(X;t)(1-C_{a}(X;t)) = 1,
	\]
	and therefore
	\[
    	C_{a}(X;t) =1 - W_{a}(X;t)^{-1}.
	\]
	Finally, we have
	\[
    	C_{b}(Z;t) = C_{b}(Y_1;t) + C_{b}(Y_2;t).
	\]
The rest follows from Equation \eqref{eq:adju}.
\end{proof}

\subsection{Neutrino identities} \label{neutrino}

The key to our analysis will be the ability to write the entries of $E_r$ in terms of the characteristic polynomial of vertex deleted subgraphs and the eigenvalues of $A$. For details on what follows below, we refer the reader to \cite[Chapter 4]{GodsilAlgebraicCombinatorics}.

Working with the generating function formalism, we consider
\[
	\sum_{k \geq 0} A^k t^k = (I - tA)^{-1},
\]
which leads to the expression
\begin{align} \label{eq:neutrino0}
	(tI - A)^{-1} = \sum_{r = 0}^d \frac{1}{t-\theta_r} E_r.
\end{align}
By using the adjugate matrix expression for the inverse of a matrix, it follows that
\begin{align} \label{eq:neutrino1}
	\bra a E_r \ket a = \frac{(t - \theta_r)\phi(X \backslash a;t)}{\phi(X;t)} \Bigg|_{t = \theta_r},
\end{align}
where this is to be understood as a way of recovering the coefficient of $(t-\theta_r)^{-1}$ in the expansion of $\phi(X \backslash a;t)/\phi(X;t)$.
	
With a little more work and using a result due to Jacobi, one obtains
\begin{align} \label{eq:neutrino2}
	\bra b E_r \ket a = \frac{(t - \theta_r)\sqrt{\phi(X \backslash a;t)\phi(X \backslash b;t) - \phi(X;t)\phi(X \backslash \{a,b\};t)}}{\phi(X;t)} \Bigg|_{t = \theta_r}.
\end{align}
The square root can be shown to be a polynomial, and it has an expression in terms of path deleted subgraphs. If $\mathcal{P}_{ab}$ is the set of all vertex sets of paths between $a$ and $b$ (inclusive), then it is an exercise to show that
\begin{align} \label{eq:neutrino3}
	\sqrt{\phi(X \backslash a;t)\phi(X \backslash b;t) - \phi(X;t)\phi(X \backslash \{a,b\};t)} = \sum_{P \in \mathcal{P}_{ab}} \phi(X \backslash P;t).
\end{align}

Expressions \eqref{eq:neutrino1} and \eqref{eq:neutrino2} (or equivalent forms) have been used in various contexts for a long time, but they did not seem to be well known to the wide scientific community. They were rediscovered recently in the context of the physics of neutrino oscillations, leading to the vast survey \cite{NeutrinoIdentities} of their known uses, along with some media coverage.

\subsection{Variational principles for eigenvalues}

For the results in this subsection, we refer the reader to \cite[Chapter 3]{BhatiaMatrixAnalysis}.

Assume $A$ is a symmetric matrix acting on a finite vector space $\mathcal{V}$, and that $\lambda^\downarrow_k(A)$ denotes the $k$-th largest eigenvalue of $A$, and $\lambda^\uparrow_k(A)$ the $k$-th smallest. By $\mathcal{U} \subseteq \mathcal{V}$ we mean that $\mathcal{U}$ is a subspace of $\mathcal{V}$. The minimax principle for eigenvalues of symmetric matrices states that
\begin{align*}
	\lambda^\downarrow_k(A) & = \max_{\substack{\Uu \subseteq \Vv \\ \dim \Uu = k}} \min_{\substack{\ket v \in \Uu \\ \braket{v}{v} = 1}} \bra u A \ket u  = \min_{\substack{\Uu \subseteq \Vv \\ \dim \Uu = n-k+1}} \max_{\substack{\ket v \in \Uu \\ \braket{v}{v} = 1}} \bra u A \ket u.
\end{align*}

From this, several consequences ensue, and we list those which will be useful to us. The first is the well known Cauchy interlacing.

\begin{theorem}\label{thm:cauchy}
	Let $A$ be an $n\times n$ symmetric matrix, and let $S$ be an $n\times m$ matrix so that $S^T S = I$. Let $B = S^T A S$. Then
	\[
	\lambda^\downarrow_k(A) \geq \lambda^\downarrow_k(B) \quad \text{and} \quad \lambda^\uparrow_k(B) \geq \lambda^\uparrow_k(A).
	\]
\end{theorem}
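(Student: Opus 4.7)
The plan is to apply the minimax principle stated just above the theorem. For the first inequality, let $\Uu_B \subseteq \Rds^m$ be a $k$-dimensional subspace that attains the outer maximum in
\[
	\lambda^\downarrow_k(B) = \max_{\substack{\Uu \subseteq \Rds^m \\ \dim \Uu = k}} \min_{\substack{\ket v \in \Uu \\ \braket{v}{v} = 1}} \bra v B \ket v.
\]
Consider its image $S(\Uu_B) \subseteq \Rds^n$.

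Next I would verify that $S$ restricted to $\Uu_B$ is an isometry: from $S^T S = I$, for any $\ket v \in \Uu_B$ we have $\braket{Sv}{Sv} = \bra v S^T S \ket v = \braket{v}{v}$. In particular $S$ is injective on $\Uu_B$, so $\dim S(\Uu_B) = k$, and unit vectors map to unit vectors. Moreover
\[
	\bra v B \ket v = \bra v S^T A S \ket v = (S\ket v)^T A (S\ket v),
\]
so the map $\ket v \mapsto S\ket v$ provides a bijection between unit vectors of $\Uu_B$ and unit vectors of $S(\Uu_B)$ that preserves the quadratic form. Hence
\[
	\min_{\substack{\ket u \in S(\Uu_B) \\ \braket{u}{u} = 1}} \bra u A \ket u = \min_{\substack{\ket v \in \Uu_B \\ \braket{v}{v} = 1}} \bra v B \ket v = \lambda^\downarrow_k(B).
\]
Since $S(\Uu_B)$ is a particular $k$-dimensional subspace of $\Rds^n$, the minimax expression for $A$ gives $\lambda^\downarrow_k(A) \geq \lambda^\downarrow_k(B)$.

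For the second inequality, I would simply apply the same argument to $-A$ and $-B = S^T(-A)S$, using that $\lambda^\uparrow_k(M) = -\lambda^\downarrow_k(-M)$ for any symmetric $M$. There is no real obstacle here; the only thing to handle carefully is that $S$ is a genuine isometry (so that dimensions and unit-norm conditions are preserved under $\ket v \mapsto S\ket v$), which is exactly what the hypothesis $S^T S = I$ encodes.
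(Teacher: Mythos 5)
Your proof is correct and follows exactly the route the paper intends: it states the result as a consequence of the minimax principle and refers to Bhatia for details, and your argument (push the optimal $k$-dimensional subspace for $B$ forward through the isometry $S$, observe that the quadratic form and unit sphere are preserved, then negate $A$ for the second inequality) is precisely that standard derivation. No gaps.
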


Cauchy's interlacing says that the eigenvalues of a vertex-deleted subgraph lie in-between the eigenvalues of the original graph, thus, in particular, the multiplicity of an eigenvalue decreases by at most 1 upon the deletion of a vertex.

In our work, we will also need information about the eigenvalues of the sum of two symmetric matrices. The inequalities below are usually attributed to Weyl.

\begin{theorem}\label{thm:weyl}
	Let $A$ and $B$ be symmetric $n \times n$ matrices. Fix index $k$. Then, for all $i \leq k$,
	\[
		\lambda^\downarrow_k(A + B) \leq \lambda^\downarrow_i(A) + \lambda^\downarrow_{k-i+1}(B),	
	\]
	and, for all $i \geq k$,
	\[
		\lambda^\downarrow_k(A + B) \geq \lambda^\downarrow_i(A) + \lambda^\downarrow_{k-i+n}(B).
	\]
\end{theorem}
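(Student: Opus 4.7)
The plan is to derive both inequalities directly from the minimax principle stated above, by choosing subspaces spanned by eigenvectors of $A$ and of $B$ separately and using a dimension count to locate a common test vector.

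For the upper bound, assume $i \leq k$ and invoke the min-max form of the principle applied to $A+B$: any subspace $\Uu_0$ of dimension $n-k+1$ satisfies $\lambda^\downarrow_k(A+B) \leq \max_{\ket v \in \Uu_0,\,\braket{v}{v}=1} \bra v (A+B)\ket v$. I would let $\Uu_A$ be spanned by the eigenvectors of $A$ associated with $\lambda^\downarrow_i(A),\dots,\lambda^\downarrow_n(A)$, so $\dim \Uu_A = n-i+1$, and let $\Uu_B$ be spanned by the eigenvectors of $B$ associated with $\lambda^\downarrow_{k-i+1}(B),\dots,\lambda^\downarrow_n(B)$, so $\dim \Uu_B = n-k+i$. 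Their dimensions sum to $2n-k+1$, so $\Uu_A\cap\Uu_B$ has dimension at least $n-k+1$, and I would take $\Uu_0$ inside this intersection. By construction every unit vector $\ket v$ there satisfies $\bra v A \ket v \leq \lambda^\downarrow_i(A)$ and $\bra v B \ket v \leq \lambda^\downarrow_{k-i+1}(B)$; adding these two bounds gives the first Weyl inequality.

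For the lower bound, assume $i \geq k$ and dualize by using the max-min form, so that any $k$-dimensional subspace $\Uu_0$ satisfies $\lambda^\downarrow_k(A+B) \geq \min_{\ket v \in \Uu_0,\,\braket{v}{v}=1} \bra v (A+B)\ket v$. I would now let $\Uu_A$ be the span of the top $i$ eigenvectors of $A$ and $\Uu_B$ the span of the top $k-i+n$ eigenvectors of $B$, where the latter range is valid precisely because $i \geq k$ forces $k-i+n \leq n$. Their intersection has dimension at least $i+(k-i+n)-n = k$, and for any unit $\ket v$ there, $\bra v A \ket v \geq \lambda^\downarrow_i(A)$ and $\bra v B \ket v \geq \lambda^\downarrow_{k-i+n}(B)$; summing yields the second inequality.

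There is no real obstacle here; the proof is entirely a matter of choosing the correct form of the minimax principle to match the direction of the desired inequality and then verifying that the indices align so that $\dim \Uu_A + \dim \Uu_B > n$. A slicker but less transparent alternative for the lower bound is to apply the already-established upper bound to $-A$ and $-B$, using the identity $\lambda^\downarrow_k(-M) = -\lambda^\downarrow_{n-k+1}(M)$.
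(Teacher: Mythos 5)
Your proof is correct: the index bookkeeping checks out in both directions ($\dim\Uu_A+\dim\Uu_B = 2n-k+1$ and $n+k$ respectively, so the intersections are large enough), and the final remark about deducing the lower bound from the upper bound via $-A$, $-B$ is also valid. The paper states this theorem without proof, citing Bhatia and noting only that it is a consequence of the minimax principle; your subspace-intersection argument is exactly the standard derivation being alluded to.
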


Finally, we will also require knowledge about the sum of eigenvalues of a matrix. The most general principle is usually known as Wielandt minimax which results in a theorem due to Lidskii, though we will only need the simpler form, shown below, an immediate consequence of a known result due to Ky Fan.

\begin{theorem}\label{thm:kyfan}
	Let $A$ and $B$ be symmetric $n\times n$ matrices. Then, for any $k \in \{1,\ldots,n\}$,
	\[
		\sum_{j = 1}^k \lambda^\downarrow_k(A + B) \leq \sum_{j = 1}^k \lambda^\downarrow_k(A) + \sum_{j = 1}^k \lambda^\downarrow_k(B)
	\]
\end{theorem}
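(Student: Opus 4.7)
The plan is to deduce the theorem from Ky Fan's variational identity
\[
\sum_{j=1}^k \lambda^\downarrow_j(M) = \max\bigl\{\operatorname{tr}(S^T M S) : S \in \Rds^{n \times k},\ S^T S = I_k\bigr\},
\]
which extends the single-eigenvalue minimax principle to a characterization of the sum of the top $k$ eigenvalues in terms of traces of orthogonal compressions. Once this identity is in hand, the inequality follows in one line: pick $S^\star$ achieving the maximum on the left with $M = A+B$, and apply additivity of the trace to split
\[
\operatorname{tr}\bigl((S^\star)^T(A+B)S^\star\bigr) = \operatorname{tr}\bigl((S^\star)^T A S^\star\bigr) + \operatorname{tr}\bigl((S^\star)^T B S^\star\bigr).
\]
Each summand on the right is bounded above by the corresponding maximum, since $S^\star$ is a feasible but not necessarily optimal choice of $S$ for $A$ and for $B$ individually, and summing yields exactly the claim.

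The substantive step is therefore proving the variational identity itself, which I expect to be the main obstacle. Spectrally decomposing $M = \sum_i \theta_i \ket{e_i}\bra{e_i}$ with $\theta_1 \geq \cdots \geq \theta_n$, one rewrites
\[
\operatorname{tr}(S^T M S) = \sum_i \theta_i\, \|S^T \ket{e_i}\|^2,
\]
and observes that the coefficients $c_i := \|S^T \ket{e_i}\|^2$ lie in $[0,1]$ and satisfy $\sum_i c_i = \operatorname{tr}(S S^T) = \operatorname{tr}(S^T S) = k$. A short rearrangement argument---or, equivalently, a direct application of Cauchy interlacing (Theorem \ref{thm:cauchy}) to the $k\times k$ compression $S^T M S$, whose $j$th largest eigenvalue is bounded above by $\lambda^\downarrow_j(M)$ and whose eigenvalues sum to the trace---then shows that the maximum over this constraint set equals $\theta_1 + \cdots + \theta_k$, attained by taking the columns of $S^\star$ to be the top $k$ eigenvectors of $M$. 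The only delicate point is the constraint $c_i \leq 1$: without it the linear functional $\sum_i \theta_i c_i$ would be maximized by concentrating all mass on $\theta_1$, so orthonormality of the columns of $S$ is essential to reach the correct bound.
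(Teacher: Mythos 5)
Your proof is correct, and it follows exactly the route the paper intends: the paper states this result without proof, citing it as ``an immediate consequence of a known result due to Ky Fan,'' and that known result is precisely the variational identity $\sum_{j=1}^k \lambda^\downarrow_j(M) = \max\{\operatorname{tr}(S^T M S) : S^T S = I_k\}$ that you establish and then exploit via additivity of the trace. Your observation that the identity itself can be closed off by applying Cauchy interlacing (Theorem \ref{thm:cauchy}) to the compression $S^T M S$ is a clean way to make the argument self-contained within the paper's own toolkit.
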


\subsection{Double stars and extended double stars}

Our case analysis in the next sections will require us to rule out perfect state transfer in double stars and extended double stars. A star $S_k$ is the complete bipartite graph $K_{1,k}$, where $k$ is allowed to be $0$, in which case $S_0$ is the empty graph with one vertex.

If $Z$ is as in Figure \ref{figure1}, with $Y_1 = S_k$, $X=P_2$ and $Y_2 = S_\ell$, then $Z$ is a double star, denoted by $S_{k,\circ\circ,\ell}$. For these, the work is already done.

\begin{theorem}[\cite{FanGodsil}, Theorem 4.6] \label{2star}
There is no perfect state transfer on the double star graph $S_{k,\circ\circ,\ell}$ for $k$ or $\ell$ at least $1$.
\end{theorem}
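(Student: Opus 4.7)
First I dispose of the case $k \neq \ell$: then $\deg(a) = k+1 \neq \ell + 1 = \deg(b)$, and since cospectral vertices must share a degree, no PST between $a$ and $b$ is possible. Hence assume $k = \ell =: m \geq 1$, and let $Z = S_{m,\circ\circ,m}$. The involution $\tau$ swapping the two halves of $Z$ (and exchanging $a$ with $b$) splits $\Rds^{V(Z)}$ into its $\pm 1$-eigenspaces. On each half, the $m-1$ leaf combinations summing to zero are orthogonal to $\ket a$ and $\ket b$ and lie in $\ker A$, so the eigenvalue $0$ is not in the support of $a$. What remains is two $2$-dimensional $A$-invariant subspaces, spanned by $\ket a \pm \ket b$ and by $s_a \pm s_b$ (with $s_a$ the sum of $a$'s leaf characteristic vectors). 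Direct computation yields the quotient eigenvalues
\[
    \mu_\pm = \tfrac{1}{2}\bigl(1 \pm \sqrt{4m+1}\bigr) \ \ (\text{symmetric}), \qquad \nu_\pm = \tfrac{1}{2}\bigl(-1 \pm \sqrt{4m+1}\bigr) \ \ (\text{antisymmetric}),
\]
and the corresponding eigenvectors have nonzero $\ket a$-component, so all four lie in the eigenvalue support of $a$. A $\tau$-symmetric eigenvector has equal $a$- and $b$-components, forcing $\sigma_r = +1$ at $\mu_\pm$; antisymmetric eigenvectors likewise yield $\sigma_r = -1$ at $\nu_\pm$.

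Next I apply Theorem~\ref{thm:pstcha}. If $4m+1$ is not a perfect square, condition (b) demands a single integer $\alpha$ with $2\theta_r \in \alpha + \Zds\sqrt{\Delta}$ for every $r$ in the support of $a$. But the rational parts of $2\mu_\pm$ and $2\nu_\pm$ are $+1$ and $-1$ respectively, so no common $\alpha$ exists, already precluding PST.

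If instead $4m+1 = q^2$ for some odd $q \geq 3$, then all four eigenvalues are integers and I turn to condition (c). The differences $\beta_0 - \beta_r = 2(\theta_0 - \theta_r)$ take values in $\{0, 2, 2q, 2(q+1)\}$, so any admissible $g$ must divide $2$. If $g = 1$, every $k_r$ is even, forcing $\sigma_r = +1$ for all $r$, contradicting $\sigma_r = -1$ at $\nu_\pm$. If $g = 2$, then at $\mu_- = (1-q)/2$ one computes $k_r = q$, which is odd; but $\mu_-$ is symmetric, so $\sigma_r = +1$ demands $k_r$ even --- contradiction. The main bookkeeping hazard is keeping the $\sigma_r$-assignments straight across the symmetric/antisymmetric split, but once the spectrum and its $\sigma$-pattern are in hand every branch closes.
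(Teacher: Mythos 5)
Your argument is correct and complete: the symmetric/antisymmetric decomposition gives the four simple eigenvalues $(\pm 1\pm\sqrt{4m+1})/2$ in the support of $a$ with the stated $\sigma_r$-pattern, and both the irrational case (no common $\alpha$ in condition (b) of Theorem \ref{thm:pstcha}) and the perfect-square case (parity failure in condition (c) for $g=1,2$) close as you say. The paper itself does not prove Theorem \ref{2star} --- it only cites Fan and Godsil --- but your proof is exactly the standard equitable-partition-plus-eigenvalue-conditions argument used there and in the paper's parallel treatment of extended double stars, so nothing further is needed.
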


If $Z$ is as in Figure \ref{figure1}, with $Y_1 = S_k$, $X=P_3$ and $Y_2 = S_\ell$, then $Z$ is an extended double star, denoted by $S_{k,\circ\circ\circ,\ell}$. As demonstrated by Hou, Gu, and Tong~\cite{HouGuTong}, these also do not admit perfect state transfer.

\begin{theorem} [\cite{HouGuTong}, Theorem 2.8] \label{ext2star}
There is no perfect state transfer on the extended double star graph $S_{k,\circ\circ\circ,\ell}$ for $k$ or $\ell$ at least $1$.
\end{theorem}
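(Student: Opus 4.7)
The plan is to reduce the claim to a small, completely explicit spectrum computation and then apply the integrality criterion of Theorem~\ref{thm:pstcha}(b). A first reduction: strong cospectrality of $a$ and $b$ forces them to be cospectral, hence to have the same degree. In $S_{k,\circ\circ\circ,\ell}$ the degrees are $k+1$ and $\ell+1$, so we must have $k=\ell$; otherwise Theorem~\ref{thm:pstcha}(a) already fails. From now on I assume $k=\ell\geq 1$.

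Next I would exploit symmetry to diagonalise the adjacency matrix. Let $m$ be the middle vertex of the connecting path, let $a_1,\dots,a_k$ be the leaves at $a$ and $b_1,\dots,b_k$ the leaves at $b$, and set $\ket{A}=\tfrac{1}{\sqrt{k}}\sum_i\ket{a_i}$ and $\ket{B}=\tfrac{1}{\sqrt{k}}\sum_i\ket{b_i}$. The adjacency matrix preserves the five-dimensional subspace $\Uu$ spanned by $\ket a,\ket b,\ket m,\ket A,\ket B$; its orthogonal complement consists of vectors supported on the leaves and summing to zero on each side, which all lie in $\ker A$ and vanish on $a$ and $b$, so they contribute nothing to the eigenvalue support of $a$. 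Within $\Uu$ I would further split by the $a\leftrightarrow b$ involution. The symmetric part, spanned by $\ket a+\ket b$, $\ket A+\ket B$, $\ket m$, gives a $3\times 3$ block with eigenvalues $0$ and $\pm\sqrt{k+2}$, while the antisymmetric part yields a $2\times 2$ block with eigenvalues $\pm\sqrt{k}$. A direct check shows that the $0$-eigenvector inside the symmetric block has vanishing coefficient on $\ket a+\ket b$, so $0$ is not in the eigenvalue support of $a$, which is therefore exactly $\{\pm\sqrt{k+2},\pm\sqrt{k}\}$.

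Finally I would apply Theorem~\ref{thm:pstcha}(b) to this four-element set. Since it contains both $\theta$ and $-\theta$ for each of the two irrational values, matching the prescribed form $\theta_r=(\alpha+\beta_r\sqrt{\Delta})/2$ across the $\pm$ pairs forces $\alpha=0$. Then $4(k+2)=\beta_0^2\Delta$ and $4k=\beta_1^2\Delta$, and subtracting gives $(\beta_0^2-\beta_1^2)\Delta=8$, so the square-free $\Delta$ divides $8$ and lies in $\{1,2\}$. If $\Delta=1$, then $k$ and $k+2$ are both perfect squares, which is impossible since $(b-a)(b+a)=2$ has no solution in positive integers. If $\Delta=2$, then $k=2m^2$ and $k+2=2n^2$ with $n^2-m^2=1$, forcing $m=0$ and contradicting $k\geq 1$. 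The degenerate case in which some element of the support is rational collapses to the same ``consecutive squares differing by $2$'' obstruction, and so PST is ruled out.

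The main obstacle, in my view, is correctly pinning down the eigenvalue support of $a$ inside the block decomposition, especially confirming that $0$ drops out; every later step is elementary arithmetic. As an alternative packaging, Lemma~\ref{1sum} applied twice gives $\phi(Z;t)$ in closed form, from which the same four-element support can be read off directly, which may be a cleaner route if one prefers not to guess the symmetry-adapted basis.
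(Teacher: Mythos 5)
Your proposal is correct and takes essentially the same approach as the argument the paper relies on for this result (cited from Hou, Gu, and Tong): reduce to $k=\ell$ by cospectrality of degrees, use the symmetry-adapted basis (equivalently, the equitable partition into the five orbits) to show the eigenvalue support of $a$ is exactly $\{\pm\sqrt{k+2},\,\pm\sqrt{k}\}$, and then rule out the eigenvalue condition of Theorem~\ref{thm:pstcha}(b). Your closing Diophantine case analysis is fine but can be compressed: condition (b) forces any two distinct eigenvalues in the support to differ by an integer multiple of $\sqrt{\Delta}\geq 1$, whereas $\sqrt{k+2}-\sqrt{k}=2/(\sqrt{k+2}+\sqrt{k})<1$ for all $k\geq 1$.
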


\section{Strong cospectrality for cut vertices} \label{sec:strongcut}

\textit{From this section on, we assume all polynomials use $t$ as as their variable. In order to simplify the notation, we will usually
denote the charcteristic polynomial of a graph $X$ by $\phi(X)$.}

\begin{theorem} \label{thm:walkequiv}
	Let $Z$ be given as in Figure \ref{figure1}. Assume $a$ and $b$ are cospectral in $X$. Thus, $a$ and $b$ are cospectral in $Z$ if and only if
	\[
		\frac{\phi(Y_1 \ba a)}{\phi(Y_1)} = \frac{\phi(Y_2 \ba b)}{\phi(Y_2)}.
	\]
\end{theorem}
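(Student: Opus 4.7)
The plan is to translate cospectrality of $a$ and $b$ in $Z$ into a polynomial identity and then expand both sides using Lemma \ref{1sum} twice, until the assumed cospectrality in $X$ forces a cancellation of all but the desired factor.

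First I would use the standard fact that $a$ and $b$ are cospectral in $Z$ if and only if $\phi(Z \ba a) = \phi(Z \ba b)$. Because $a$ is a cut vertex of $Z$, deletion of $a$ separates $Z$ into the disjoint union of $Y_1 \ba a$ and the subgraph induced on $V(X) \cup V(Y_2)$; call this latter graph $Z_2$, so that $Z_2$ is the $1$-sum of $X$ and $Y_2$ at $b$. Therefore $\phi(Z \ba a) = \phi(Y_1 \ba a)\,\phi(Z_2 \ba a)$. Symmetrically, writing $Z_1$ for the $1$-sum of $Y_1$ and $X$ at $a$, we have $\phi(Z \ba b) = \phi(Y_2 \ba b)\,\phi(Z_1 \ba b)$.

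The next step is to apply Lemma \ref{1sum} to $Z_2 \ba a$, viewed as the $1$-sum of $X \ba a$ and $Y_2$ at $b$, and to $Z_1 \ba b$, viewed as the $1$-sum of $Y_1$ and $X \ba b$ at $a$:
\begin{align*}
\phi(Z_2 \ba a) &= \phi(X \ba a)\phi(Y_2 \ba b) + \phi(X \ba \{a,b\})\phi(Y_2) - t\,\phi(X \ba \{a,b\})\phi(Y_2 \ba b),\\
\phi(Z_1 \ba b) &= \phi(Y_1)\phi(X \ba \{a,b\}) + \phi(Y_1 \ba a)\phi(X \ba b) - t\,\phi(Y_1 \ba a)\phi(X \ba \{a,b\}).
\end{align*}
Substituting these into $\phi(Y_1 \ba a)\,\phi(Z_2 \ba a) = \phi(Y_2 \ba b)\,\phi(Z_1 \ba b)$ and using the hypothesis $\phi(X \ba a) = \phi(X \ba b)$ that $a,b$ are cospectral in $X$, the term carrying $\phi(X\ba a)$ and the term carrying $t\,\phi(X \ba \{a,b\})\phi(Y_1\ba a)\phi(Y_2\ba b)$ each appear identically on the two sides and cancel. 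What remains is
\[
\phi(X \ba \{a,b\})\bigl[\phi(Y_1 \ba a)\phi(Y_2) - \phi(Y_1)\phi(Y_2 \ba b)\bigr] = 0,
\]
and because $\phi(X \ba \{a,b\})$ is a monic polynomial, hence nonzero in $\mathbb{R}[t]$, this is equivalent to the stated identity.

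I do not expect a genuine obstacle here; the whole argument is bookkeeping once the double application of Lemma \ref{1sum} is set up. The only care needed is to keep straight which vertex plays the role of the gluing vertex in each 1-sum, to observe that $a,b$ cospectral in $X$ is exactly what collapses the two unwanted pairs of terms, and to note that $\phi(X\ba\{a,b\})$ is never identically zero (so the cancellation is legitimate in the polynomial ring).
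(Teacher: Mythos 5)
Your proof is correct and follows essentially the same route as the paper: reduce cospectrality in $Z$ to $\phi(Z\setminus a)=\phi(Z\setminus b)$, expand both sides via Lemma \ref{1sum} across the two cut vertices, and let $\phi(X\setminus a)=\phi(X\setminus b)$ cancel everything except $\phi(X\setminus\{a,b\})\bigl[\phi(Y_1\setminus a)\phi(Y_2)-\phi(Y_1)\phi(Y_2\setminus b)\bigr]$. In fact your expansion of $\phi(Z\setminus a)$ as $\phi(Y_1\setminus a)$ times the characteristic polynomial of the $1$-sum of $X\setminus a$ and $Y_2$ at $b$ is the careful version of the paper's displayed identities, which as printed use $\phi(X)$ and $\phi(X\setminus b)$ where $\phi(X\setminus a)$ and $\phi(X\setminus\{a,b\})$ should appear.
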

\begin{proof}
	From the $1$-sum lemma (Lemma \ref{1sum}), it follows that:
\[
	\ch{Z \ba a} = \ch{Y_1 \ba a} \cdot (\ch{X\ba b} \ch{Y_2} + \ch{X} \ch{Y_2 \ba b} - t \ch{X \ba b}\ch{Y_2 \ba b})
\]
\[
	\ch{Z \ba b} = \ch{Y_2 \ba b} \cdot (\ch{X \ba a} \ch{Y_1} + \ch{X} \ch{Y_1 \ba a} - t \ch{X \ba a}\ch{Y_1 \ba a})
\]
	Note that $\ch{X \ba a} = \ch{X\ba b}$, as a consequence of Equation \eqref{eq:adju}, as $a$ and $b$ are cospectral in $X$. It follows that $\ch{Z \ba a} = \ch{Z \ba b}$ if and only if
\[
    \ch{Y_1 \ba a}\ch{Y_2} = \ch{Y_2 \ba b}\ch{Y_1}. \qedhere
\]    
\end{proof}

We will say that vertices $a \in Y_1$ and $b \in Y_2$ are walk equivalent if they satisfy the condition in the previous theorem. 

Recall from Theorem \ref{thm:pstcha} that we require $E_r \ket a = \pm E_r \ket b$ in order for perfect state transfer to hold (meaning, that $a$ and $b$ are strongly cospectral). The result above provides a condition for $a$ and $b$ to be cospectral. Fortunately, when there is a unique path joining $a$ and $b$, we can show that the two are equivalent.

For the result below, we use Lemma 2.4 from \cite{CoutinhoGodsilPSTpolytime} that says that $a$ and $b$ are strongly cospectral in a given graph $X$ if and only if $\phi(X\ba a) = \phi(X \ba b)$ and the poles of $\phi(X \ba ab)/\phi(X)$ are simple.

\begin{theorem} \label{thm:path}
	Let $Z$ be a graph as in Figure \ref{figure1}. Assume the graph $X$ is a path (and thus $a$ and $b$ are cospectral in $X$). Then they are cospectral in $Z$ if and only if they are strongly cospectral in $Z$.
\end{theorem}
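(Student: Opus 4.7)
The implication ``strongly cospectral $\Rightarrow$ cospectral'' is immediate from the discussion following Theorem~\ref{thm:pstcha}, so my work is confined to the converse. First, I would invoke the characterisation cited just above the statement (Lemma~2.4 of \cite{CoutinhoGodsilPSTpolytime}): in the host graph $Z$, strong cospectrality of $a$ and $b$ is equivalent to the conjunction of $\phi(Z\ba a)=\phi(Z\ba b)$ and ``every pole of $\phi(Z\ba\{a,b\})/\phi(Z)$ is simple''. The hypothesis of the theorem supplies the first condition, so the task reduces to establishing the pole-simplicity.

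Next, I would exploit the assumption that $X$ is a path. Because every internal vertex of $X$ has both its neighbours on $X$ itself, the only $a$-$b$ path in $Z$ has vertex set $V(X)$. The sum in identity~\eqref{eq:neutrino3} therefore collapses to a single term, yielding
\[
    \psi_{ab}(t):=\sqrt{\phi(Z\ba a)\phi(Z\ba b)-\phi(Z)\phi(Z\ba\{a,b\})}=\phi(Y_1\ba a)\,\phi(Y_2\ba b).
\]
On the other hand, deleting $a$ and $b$ splits $Z$ into the three components $Y_1\ba a$, $X\ba\{a,b\}$, and $Y_2\ba b$, so
\[
    \phi(Z\ba\{a,b\})=\phi(X\ba\{a,b\})\,\psi_{ab}(t).
\]

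The plan then finishes by combining the two identities into
\[
    \frac{\phi(Z\ba\{a,b\})}{\phi(Z)} = \phi(X\ba\{a,b\}) \cdot \frac{\psi_{ab}(t)}{\phi(Z)}.
\]
The rational function $\psi_{ab}/\phi(Z)$ already has only simple poles: this is precisely what makes the evaluation in~\eqref{eq:neutrino2} produce the finite matrix entry $\bra b E_r \ket a$ as a limit at $t=\theta_r$. Multiplying by the polynomial $\phi(X\ba\{a,b\})$ cannot raise the order of any pole, so the quotient on the left also has only simple poles, which is exactly the condition I needed. I do not foresee a real obstacle in this plan; the one delicate observation is that the ``unique-path'' geometry simultaneously forces both $\psi_{ab}$ and $\phi(Z\ba\{a,b\})$ to share the factor $\phi(Y_1\ba a)\phi(Y_2\ba b)$, and it is this shared factor that produces the required cancellation in the quotient.
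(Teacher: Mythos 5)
Your proposal is correct and follows essentially the same route as the paper: reduce to showing the poles of $\phi(Z\backslash\{a,b\})/\phi(Z)$ are simple, observe that since $X$ is a path the sum in \eqref{eq:neutrino3} collapses to the single term $\phi(Y_1\backslash a)\phi(Y_2\backslash b)$, and then write $\phi(Z\backslash\{a,b\})/\phi(Z)$ as that same square-root expression (which has simple poles by \eqref{eq:neutrino0} and \eqref{eq:neutrino2}) multiplied by the polynomial $\phi(X\backslash\{a,b\})$. No gaps.
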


\begin{proof}
	The only thing we need to show is that the poles of $\phi(Z \ba ab)/\phi(Z)$ are simple.
	
	From Equations \eqref{eq:neutrino0} and \eqref{eq:neutrino2}, we have that
	\[
		(tI-A(W))^{-1}_{a,b} = \frac{\sqrt{\phi(W \backslash a)\phi(Z \backslash b) - \phi(Z)\phi(Z \backslash ab)}}{\phi(Z)}
	\]
	has simple poles (and this is also true with $X$ instead of $Z$).

	From Equation \eqref{eq:neutrino3}, it follows that
	\[
		\sqrt{\phi(Z \backslash a)\phi(Z \backslash b) - \phi(Z)\phi(Z \backslash ab)} = \phi(Y_1 \ba a) \phi(Y_2 \ba b) \phi(X \ba P) = \phi(Y_1 \ba a) \phi(Y_2 \ba b).
	\]
	
	Finally, note that
	\begin{align*}
		\frac{\phi(Z \ba ab)}{\phi(Z)} & = \frac{\phi(Y_1 \ba a) \phi(Y_2 \ba b) \phi(X \ba ab)}{\phi(Z)} \\
		& = \frac{\sqrt{\phi(Z \backslash a)\phi(Z \backslash b) - \phi(Z)\phi(Z \backslash ab)}\phi(X \ba ab)}{\phi(Z)},
	\end{align*}
	which has simple poles.
\end{proof}

\section{No state transfer over one bridge} \label{sec:p2}

In this section, we will show that if two vertices are joined by a bridge, then there is no perfect state transfer between them (unless the graph itself is $P_2$).

\begin{theorem} \label{thm:walk-equiv-loops}
	Let $Z$ be given as in Figure \ref{figure1}, and assume $X = P_2$. Assume $a$ and $b$ are strongly cospectral in $Z$. The following are equivalent.
	\begin{enumerate}[(a)]
		\item $\theta$ is eigenvalue of $A(Y_1) + \ketbra{a}{a}$ in the support of $a$ 
		\item $\theta$ is eigenvalue of $A(Y_2) + \ketbra{b}{b}$ in the support of $b$ 
		\item $\theta$ is eigenvalue of $A(Z)$ with $E_\theta \ket a = E_\theta \ket b \neq 0$.
	\end{enumerate}
	The following are equivalent.
	\begin{enumerate}[(a)]
		\item $\theta$ is eigenvalue of $A(Y_1) - \ketbra{a}{a}$ in the support of $a$ 
		\item $\theta$ is eigenvalue of $A(Y_2) - \ketbra{b}{b}$ in the support of $b$ 
		\item $\theta$ is eigenvalue of $A(Z)$ with $E_\theta \ket a = - E_\theta \ket b \neq 0$.
	\end{enumerate}
	Moreover, the eigenvalues of $A(Z)$ not in the support of $a$ and $b$ are eigenvalues of $A(Y_1) \pm \ketbra{a}{a}$ not in the support of $a$ or of $A(Y_2) \pm \ketbra{b}{b}$ not in the support of $b$.
\end{theorem}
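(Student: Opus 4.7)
My plan is to (i) factor $\phi(Z)$ in two convenient ways, exploiting strong cospectrality to collapse the $1$-sum expression, and then (ii) establish the three-way equivalences by a direct eigenvector correspondence, using strong cospectrality to pin down the sign $\sigma_\theta$ in the converse direction.

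\textit{Factorization of $\phi(Z)$.} View $Z$ as the $1$-sum at $b$ of $Y_2$ with the graph $Y_1^{\bullet}$ obtained from $Y_1$ by attaching a pendant vertex $b'$ to $a$. Lemma~\ref{1sum}, together with the pendant-vertex formula $\phi(Y_1^{\bullet}) = t\phi(Y_1) - \phi(Y_1 \ba a)$, yields
\[
    \phi(Z) = \phi(Y_1)\phi(Y_2) - \phi(Y_1 \ba a)\phi(Y_2 \ba b).
\]
Removing $a$ disconnects $Z$ into $Y_1 \ba a$ and $Y_2$, so $\phi(Z \ba a) = \phi(Y_1 \ba a)\phi(Y_2)$, and similarly $\phi(Z \ba b) = \phi(Y_1)\phi(Y_2 \ba b)$. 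Strong cospectrality implies cospectrality, i.e.\ $\phi(Y_1 \ba a)\phi(Y_2) = \phi(Y_1)\phi(Y_2 \ba b)$. The matrix-determinant identity $\phi(A(Y_i) \pm \ketbra{v}{v}) = \phi(Y_i) \mp \phi(Y_i \ba v)$ (for $v \in \{a,b\}$), combined with the cospectrality relation, gives after a short manipulation
\[
    \phi(Z) = \phi(A(Y_1) + \ketbra{a}{a})\,\phi(A(Y_2) - \ketbra{b}{b}) = \phi(A(Y_1) - \ketbra{a}{a})\,\phi(A(Y_2) + \ketbra{b}{b}).
\]

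\textit{The equivalences.} For (a) $\Leftrightarrow$ (b), Sherman--Morrison gives the resolvent of $A(Y_1) + \ketbra{a}{a}$ at $a$ as $f_1/(1-f_1)$ with $f_i = \phi(Y_i \ba v)/\phi(Y_i)$; cospectrality forces $f_1 = f_2$, so this coincides with the resolvent of $A(Y_2) + \ketbra{b}{b}$ at $b$, and their poles---the eigenvalues in the respective supports---agree. For (c) $\Rightarrow$ (a), take $v = E_\theta \ket a$, a $\theta$-eigenvector of $A(Z)$: the ``$+$'' condition and cospectrality give $v_a = v_b \neq 0$, and splitting $v = v_1 \oplus v_2$ over $V(Y_1) \sqcup V(Y_2)$, the equation $A(Z) v = \theta v$ reads $(A(Y_1) v_1)_a + v_b = \theta v_a$ at $a$ and $A(Y_1) v_1 = \theta v_1$ on the rest of $Y_1$; using $v_b = v_a$, this is exactly $(A(Y_1) + \ketbra{a}{a}) v_1 = \theta v_1$ with $(v_1)_a \neq 0$. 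For (a) $\Rightarrow$ (c), use (a) $\Leftrightarrow$ (b) to select $\theta$-eigenvectors $u_1$ of $A(Y_1) + \ketbra{a}{a}$ and $u_2$ of $A(Y_2) + \ketbra{b}{b}$, normalized so that $(u_1)_a = (u_2)_b = 1$; the glued vector $v = u_1 \oplus u_2$ satisfies $A(Z) v = \theta v$ by the same local verification run in reverse. Then $E_\theta v = v$ and $v_a = 1$ give $\bra v E_\theta \ket a = \bra a E_\theta v = \bra a v = 1 \neq 0$, so $E_\theta \ket a \neq 0$; strong cospectrality supplies $E_\theta \ket a = \sigma_\theta E_\theta \ket b$ with $\sigma_\theta \in \{\pm 1\}$, and
\[
    1 = v_a = \bra v E_\theta \ket a = \sigma_\theta \bra v E_\theta \ket b = \sigma_\theta v_b = \sigma_\theta
\]
forces $\sigma_\theta = +1$, which is (c). The ``$-$'' list is handled symmetrically with the glued vector $v = u_1 \oplus (-u_2)$, for which $v_a = 1$ and $v_b = -1$.

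\textit{Eigenvalues outside the support, and the main obstacle.} If $\theta$ is an eigenvalue of $A(Z)$ not in the support of $a$ (equivalently, of $b$ by cospectrality), the first factorization forces $\theta$ to be a root of $\phi(A(Y_1) + \ketbra{a}{a})$ or of $\phi(A(Y_2) - \ketbra{b}{b})$; the contrapositive of the ``$+$'' (a) $\Leftrightarrow$ (c), or its ``$-$'' analog applied to $Y_2$, rules out $\theta$ being in the corresponding support, yielding the claim. The second factorization gives the symmetric option with the opposite loop signs. The main obstacle I expect is precisely the sign-pinning step in (a) $\Rightarrow$ (c): strong cospectrality, not mere cospectrality, is what makes the rank-$1$ inner-product calculation output $\sigma_\theta = +1$ (respectively $-1$), which is exactly why the theorem is stated under that hypothesis.
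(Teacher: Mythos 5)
Your proof is correct, and it shares the two main pillars of ours: the $1$-sum factorization $\phi(Z)=\phi(Y_1)\phi(Y_2)-\phi(Y_1\ba a)\phi(Y_2\ba b)$, which walk equivalence (cospectrality of $a$ and $b$ in $Z$) splits as $\bigl(\phi(Y_1)\pm\phi(Y_1\ba a)\bigr)\bigl(\phi(Y_2)\mp\phi(Y_2\ba b)\bigr)$, and the restriction of eigenvectors of $Z$ to $Y_1$ and $Y_2$ for the direction (c) $\Rightarrow$ (a), (b). Where you genuinely diverge is in the converse, which is the harder half: we show that an eigenvalue in the support of $a$ in $A(Y_1)+\ketbra{a}{a}$ lies in the support of $a$ in $Z$ by tracking pole orders of $\phi(Z\ba a)/\phi(Z)$ via Equation \eqref{eq:neutrino1}, interlacing, and the walk-equivalence identity, and only then recover the sign from strong cospectrality. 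You instead glue normalized eigenvectors $u_1\oplus(\pm u_2)$ of the two loop-perturbed matrices into an explicit $\theta$-eigenvector of $Z$ with prescribed values at $a$ and $b$, which simultaneously certifies membership in the support and pins down $\sigma_\theta$ through the inner-product computation; this is more elementary and avoids the multiplicity bookkeeping entirely, at the modest cost of needing the separate Sherman--Morrison resolvent argument for (a) $\Leftrightarrow$ (b), which in our treatment comes for free once both directions through (c) are in place. Two further points in your favour: your sign identity $\phi(A(Y)\pm\ketbra{v}{v})=\phi(Y)\mp\phi(Y\ba v)$ is the correct one and is applied consistently, and your closing observation that the sign-pinning step is precisely where strong (rather than mere) cospectrality is indispensable is accurate.
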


\begin{proof}
	First, to see how eigenvalues of $Z$ relate to eigenvalues of $A(Y_1) \pm \ketbra{a}{a}$ and of $A(Y_2) \pm \ketbra{b}{b}$, it is sufficient to think in terms of projecting eigenvectors. For instance, assume $\theta$ is eigenvalue of $Z$ in the support of $a$, with $E_\theta \ket a = E_\theta \ket b$, and let $f : V(Z) \to \Rds$ be a corresponding eigenvector. Then
	\[
		\theta f(a) = \sum_{u \sim a} f(u) \implies \theta f(a) = f(a) + \sum_{u \sim a,\ u \neq b} f(u)
	\]
	Then it is immediate to verify that $\theta$ is a root of $\ch{Y_1} + \ch{Y_1 \ba a}$ in the support of $a$, and of $\ch{Y_2} + \ch{Y_2 \ba b}$ in the support of $b$. Note that these are the characteristic polynomials of the graphs $Y_1$ and $Y_2$ with a loop of weight $+1$ added at vertices $a$ and $b$ respectively.
	
	Likewise, if $\theta$ is eigenvalue of $Z$ with $E_\theta \ket a = - E_\theta \ket b \neq 0$, then $\theta$ is a root of $\ch{Y_1} - \ch{Y_1 \ba a}$ and of $\ch{Y_2} - \ch{Y_2 \ba b}$.
	
	Finally, if 	$\theta$ is eigenvalue of $Z$ not in the support of $a$ and $b$, then it is an eigenvalue of both of the graphs $Y_1$ and $Y_1 \ba a$ or of both of the graphs $Y_2$ and $Y_2 \ba b$. 
	
	Second, we now relate eigenvalues of $A(Y_1) \pm \ketbra{a}{a}$ and of $A(Y_2) \pm \ketbra{b}{b}$ to eigenvalues of $Z$. From applying the $1$-sum lemma (Lemma \ref{1sum}) twice, we get
	\[
		\ch{Z} = \ch{Y_1}\ch{Y_2} - \ch{Y_1 \ba a} \ch{Y_2 \ba b}.
	\]
	Thus, because $a$ and $b$ are walk equivalent (Theorem \ref{thm:walkequiv}),
	\[
		\ch{Z} = (\ch{Y_1} \pm \ch{Y_1 \ba a}) \ (\ch{Y_2} \mp \ch{Y_2 \ba b}).
	\]
	Thus, if $\theta$ is root of $(\ch{Y_1} + \ch{Y_1 \ba a})$, then it is also of $\ch{Z}$. If $\theta$ is in the support of $a$ in $A(Y_1) + \ketbra{a}{a}$, then Equation \ref{eq:neutrino1} implies
	\[
		\frac{\ch{Y_1 \ba a} (t - \theta)}{\ch{Y_1} + \ch{Y_1 \ba a}} \bigg|_{t = \theta} \neq 0.
	\]
	From interlacing (Theorem \ref{thm:cauchy}), we have that the multiplicity of $\theta$ in $\ch{Y_1 \ba a}$ is exactly one unity smaller than its multiplicity in $(\ch{Y_1} + \ch{Y_1 \ba a})$, hence its multiplicity in $\ch{Y_1}$ is equal to its multiplicity in $\ch{Y_1 \ba a}$. Moreover,
	\[
		\ch{Z\ba a} = \ch{Y_1 \ba a}\ch{Y_2},
	\]
	and from the walk equivalence,
	\[
		\frac{\ch{Y_2}}{\ch{Y_2}-\ch{Y_2 \ba b}} = \frac{\ch{Y_1}}{\ch{Y_1}-\ch{Y_1 \ba a}}.
	\]
	Piecing everything together, we can conclude that
	\[
		\frac{\ch{Z \ba a} (t - \theta)}{\ch{Z}} \bigg|_{t = \theta} \neq 0,
	\]
	therefore $\theta$ is in the support of $a$ in $Z$.
	
	An analogous argument holds for when $\theta$ is eigenvalue of $A(Y_1) - \ketbra{a}{a}$ in the support of $a$ or of $A(Y_2) \pm \ketbra{b}{b}$ in the support of $b$.
\end{proof}

\begin{theorem} \label{thm:nopstbridge}
	Let $Z$ be given as in Figure \ref{figure1}, with $X = P_2$. If there is perfect state transfer between $a$ and $b$, then the graphs $Y_1$ and $Y_2$ have only one vertex each.
\end{theorem}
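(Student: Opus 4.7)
The plan is to argue by contradiction, assuming PST between $a$ and $b$ in $Z$ with $X=P_2$ and with $|V(Y_1)| \geq 2$ (the case $|V(Y_2)| \geq 2$ is symmetric; and since PST forces $a$ and $b$ to be cospectral, hence of equal degree in $Z$, one has $\deg_{Y_1}(a) = \deg_{Y_2}(b)$, so the case where one side is trivial and the other is not is immediately excluded). Since $Z$ is connected, both $Y_1$ and $Y_2$ are connected as well.

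The first step is to apply Theorem \ref{thm:walk-equiv-loops}: the eigenvalues of $Z$ in the support of $a$ and $b$ partition as $\Sigma^+ \cup \Sigma^-$ according to $\sigma_r = \pm 1$, where $\Sigma^+$ is simultaneously the in-support spectrum of $M_1^+ := A(Y_1) + \ketbra{a}{a}$ and of $M_2^+ := A(Y_2) + \ketbra{b}{b}$, with the analogous statement for $\Sigma^-$ via $M_1^-, M_2^-$. The spectral radius $\theta_0$ of $Z$ belongs to $\Sigma^+$ by Perron--Frobenius, and equals $\lambda_1^\downarrow(M_1^+) = \lambda_1^\downarrow(M_2^+)$.

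The next step is to extract algebraic identities by comparing traces. Since the non-support eigenvalues of $M_1^+$ and $M_1^-$ coincide (they are the common roots of $\phi(Y_1)$ and $\phi(Y_1 \ba a)$, weighted identically in both $\phi(Y_1) \pm \phi(Y_1 \ba a)$), the elementary computations $\mathrm{tr}(M_1^+) - \mathrm{tr}(M_1^-) = 2$ and $\mathrm{tr}((M_1^+)^2) = \mathrm{tr}((M_1^-)^2) = 2|E(Y_1)| + 1$ collapse to
\[
\sum_{\theta \in \Sigma^+} \theta - \sum_{\theta \in \Sigma^-} \theta = 2, \qquad \sum_{\theta \in \Sigma^+} \theta^2 = \sum_{\theta \in \Sigma^-} \theta^2.
\]
I would then insert the PST characterization from Theorem \ref{thm:pstcha}, writing each support eigenvalue as $\theta_r = (\alpha + \beta_r\sqrt{\Delta})/2$ with $\theta_0 - \theta_r = k_r g\sqrt{\Delta}/2$ and $\sigma_r = (-1)^{k_r}$. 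Matching rational and irrational components (when $\Delta > 1$) yields rigid constraints such as $\alpha \cdot (|K^+| - |K^-|) = 4$, together with a further quadratic relation coming from the second identity; the case $\Delta = 1$ is handled similarly on the rationals.

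Finally, I would invoke the variational principles. Perron--Frobenius on the positive eigenvector of $A(Y_1)$ (combined with Weyl) gives the strict estimates $\lambda_1^\downarrow(A(Y_1)) < \lambda_1^\downarrow(M_1^+) = \theta_0 \leq \lambda_1^\downarrow(A(Y_1)) + 1$ and $\lambda_1^\downarrow(M_1^-) < \lambda_1^\downarrow(A(Y_1))$; and Weyl together with Cauchy interlacing applied to the decomposition $A(Z) = A(Y_1 \sqcup Y_2) + E$, where $E$ is the rank-$2$ bridge perturbation with nonzero eigenvalues $\pm 1$, controls $\lambda_2^\downarrow(Z)$ and hence confines the PST unit $g\sqrt{\Delta}/2$. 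The \textbf{main obstacle} lies in this last step: one must reconcile the rigid algebraic possibilities coming from the trace identities with these variational bounds. I expect this to proceed by a short case analysis on $(\alpha, g, \Delta, |K^+| - |K^-|)$, exploiting the interlacing of the in-support eigenvalues of $M_1^+$ and $M_1^-$ forced by the monotone rational function $\phi(Y_1 \ba a)/\phi(Y_1)$; as a fallback, once degree considerations reduce $Y_1$ and $Y_2$ to stars, we may invoke Theorem \ref{2star} on the resulting double star to close the argument.
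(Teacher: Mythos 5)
Your setup matches the paper's: Theorem \ref{thm:walk-equiv-loops} to split the support into $\Sigma^{\pm}$ realized by $A(Y_1)\pm\ketbra{a}{a}$, a trace comparison between the two perturbations, and Weyl/interlacing. But the decisive step is missing, and you flag it yourself as the unresolved ``main obstacle.'' The paper closes the argument by restricting everything to the walk module generated by $\ket a$ in $\Rds^{V(Y_1)}$: there $A(Y_1)\pm\ketbra{a}{a}$ become $m\times m$ matrices $M\pm E_0$ whose \emph{entire} (simple) spectra are $\Phi^{+}_{ab}$ and $\Phi^{-}_{ab}$. Weyl gives $\lambda^\downarrow_j(M+E_0)\geq\lambda^\downarrow_j(M-E_0)$ for each $j$, disjointness of $\Phi^+_{ab}$ and $\Phi^-_{ab}$ plus Theorem \ref{thm:pstcha}(b) upgrades this to a gap of at least $1$ per index, and summing against $\tr(M+E_0)-\tr(M-E_0)=2$ yields $m\leq 2$ in one line. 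You have all these ingredients on the table (your trace identity $\sum_{\Sigma^+}\theta-\sum_{\Sigma^-}\theta=2$, your observation that the in-support eigenvalues of $M_1^{+}$ and $M_1^{-}$ interlace, the minimum gap $\sqrt{\Delta}\geq 1$), but you never combine them into this counting bound; instead you veer into matching rational and irrational parts of $\theta_r=(\alpha+\beta_r\sqrt{\Delta})/2$ and propose an open-ended case analysis on $(\alpha,g,\Delta,|K^+|-|K^-|)$, which is not carried out and whose termination is not evident. Without the walk-module reduction you also face a real technical issue your sketch glosses over: working with the full matrices $A(Y_1)\pm\ketbra{a}{a}$, the support eigenvalues need not exhaust the spectrum and the clean index-by-index pairing is harder to justify.

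The second half of the argument is also only gestured at. Once $m\leq 2$, the case $m=2$ still requires work: the paper shows the covering radius forces $a$ to be universal in $Y_1$, that $Y_1\setminus a$ must be regular (via an eigenbasis count), and then a quotient-matrix computation forces the degree to be $0$, so that $Z$ is a double star and Theorem \ref{2star} applies. You mention invoking Theorem \ref{2star} ``once degree considerations reduce $Y_1$ and $Y_2$ to stars,'' but supplying those degree considerations is precisely the content of this case, not a fallback. As written, the proposal is a plausible plan with the same skeleton as the paper's proof, but two of its load-bearing steps are absent.
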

\begin{proof}
	Vertices $a$ and $b$ are strongly cospectral. Let $\Phi^\pm_{ab}$ be the eigenvalues $\theta$ in the support of these vertices so that $E_\theta \ket a = \pm E_\theta \ket b$.
		
	Let $M$ be a matrix that represents the action of $A(Y_1)$ in an orthogonal basis that contains $\ket a$ for the walk module generated by $\ket a$ in $\Rds^{V(Y_1)}$. If this module has dimension $m$, let $E_0$ be the $m \times m$ matrix with $1$ in its first position, and $0$s elsewhere. It is immediate to verify that $M \pm E_0$ represents the action of $A(Y_1) \pm \ketbra{a}{a}$ on the walk module generated by $\ket a$, according to the same basis.
	
	From Theorem \ref{thm:weyl}, it follows that
	\[
		\lambda^\downarrow_j(M + E_0) \geq \lambda^\downarrow_j(M - E_0) + \lambda^\downarrow_m(2 E_0) = \lambda^\downarrow_j(M - E_0).
	\]
	Let $s$ be the sum of the eigenvalues of $A(Y_1) \pm \ketbra{a}{a}$ outside of the support of $a$. It is a consequence of Theorem \ref{thm:walk-equiv-loops} that  $\Phi^\pm_{ab}$ are the eigenvalues of $M \pm E_0$, and using the inequality above, the fact that the sets $\Phi^+_{ab}$ and $\Phi^-_{ab}$ are disjoint, and also that all distinct eigenvalues in the support of $a$ and $b$ differ by at least $1$ (Theorem \ref{thm:pstcha}, item b), we have that
	\begin{align*}
		1 & = \tr (A(Y_1) +  \ketbra{a}{a} ) \\ & = s + \sum_{\theta \in \Phi^+_{ab}} \theta \\ & \geq s + \sum_{\theta \in \Phi^-_{ab}} (\theta + 1) \\ & = m + \tr (A(Y_1) -  \ketbra{a}{a}) \\ & = m - 1.
	\end{align*}
	
	Hence $m \leq 2.$ 
	
	If equality holds we have $\Phi^+_{ab} = \{\theta_1, \theta_2\}$ and $\Phi^-_{ab} = \{\theta_1 - 1, \theta_2 - 1\}$.  As the dimension of the walk module of $\ket a$ in $Y_1$ is $2$, its covering radius is at most 1, and thus $a$ is a universal vertex (meaning, its a neighbour to all vertices in $Y_1 \ba a$).  

	 Now, there exists an eigenbasis of $A(Y_1)$ such that $|V(Y_1)| - 2$ of the vectors $\ket x$ in the basis are such that $\braket{a}{x} = 0$ (because there are only two distinct eigenvalues in the support of $a$). It follows that these vectors $\ket x$ sum to $0$ in the neighbourhood of $a$, which is $Y_1 \ba a$, and therefore $\braket{x}{\1} = 0$. The restriction of these vectors to $Y_1 \ba a$ are also eigenvectors of $Y_1\ba a$, and this graph has precisely $|V(Y_1)| - 1$ linearly independent eigenvectors.  Thus, the remaining eigenvector of $Y_1 \ba a$ is $\mathbf{1}$, so $Y_1 \ba a$ is regular; we assume of degree $k$.

	It follows that if $n = |V(Y \ba a)|$, then $\theta_1, \theta_2$ are eigenvalues of the quotient matrix 
	\[
	\begin{bmatrix} 1 & \sqrt{n} \\ \sqrt{n} & k \end{bmatrix}
	\]
	and $\theta_1 - 1, \theta_2 - 1$ are eigenvalues of the quotient matrix
	\[
	\begin{bmatrix} -1 & \sqrt{n} \\ \sqrt{n} & k \end{bmatrix}.
	\]
	Hence, we have
	\[ \theta_1 \theta_2 = k - n ,\quad \theta_1 + \theta_2 = k + 1 ,\quad \text{and}\quad  (\theta_1 - 1) (\theta_2 - 1) = -k - n\]
	which imply $k = 0$, and thus $Y_1 \ba a = \overline{K}_n$.

	Therefore $Z$ is a double star, and these do not admit perfect state transfer according to Theorem~\ref{2star}.

	The only case left is $m = 1$, so $Y_1 = K_1$, and by a symmetric argument $Y_2 = K_1$, as we wanted.
\end{proof}

\section{No state transfer over two bridges} \label{sec:p3}

Assuming the graph $Z$ given as in Figure \ref{figure2}, and assume that $X = P_3$. Define graphs $Z_1$ and $Z_2$, as in Figures \ref{figure3} and \ref{figure4}:

\begin{figure}[H]
    \centering
	\includegraphics[scale=0.5]{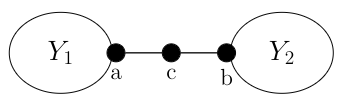}
    \caption{Graph $Z$} \label{figure2}
\end{figure}

\begin{figure}[H]
\centering
\begin{minipage}{.5\textwidth}
  \centering
  \includegraphics[scale=0.5]{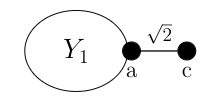}
  \caption{Graph $Z_1$} \label{figure3}
\end{minipage}%
\begin{minipage}{.5\textwidth}
  \centering
  \includegraphics[scale=0.5]{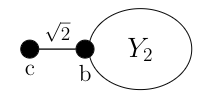}
  \caption{Graph $Z_2$} \label{figure4} 
\end{minipage}
\end{figure}

\begin{theorem} \label{thm:walk-equiv-wpath}
	Let $Z$, $Z_1$ and $Z_2$ be as in Figures \ref{figure2}, \ref{figure3}, and \ref{figure4}. Assume $a$ and $b$ are strongly cospectral in $Z$. 	The following are equivalent.
	\begin{enumerate}[(a)]
		\item $\theta$ is eigenvalue of $A(Z_1)$ in the support of $a$ 
		\item $\theta$ is eigenvalue of $A(Z_2)$ in the support of $b$ 
		\item $\theta$ is eigenvalue of $A(Z)$ with $E_\theta \ket a = + E_\theta \ket b \neq 0$.
	\end{enumerate}
	The following are equivalent.
	\begin{enumerate}[(a)]
		\item $\theta$ is eigenvalue of $A(Y_1)$ in the support of $a$ 
		\item $\theta$ is eigenvalue of $A(Y_2)$ in the support of $b$ 
		\item $\theta$ is eigenvalue of $A(Z)$ with $E_\theta \ket a = - E_\theta \ket b \neq 0$.
	\end{enumerate}
	Moreover, the eigenvalues of $A(Z)$ not in the support of $a$ and $b$ are eigenvalues of $A(Y_1)$ not in the support of $a$ or of $A(Y_2)$ not in the support of $b$, or possibly the eigenvalue $0$ if it is an eigenvalue of $A(Z_1)$ or $A(Z_2)$.
\end{theorem}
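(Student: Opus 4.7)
The plan is to combine a characteristic-polynomial factorisation, extracted from the 1-sum lemma plus walk equivalence, with a direct eigenvector transfer between $A(Z)$ and the matrices $A(Z_1)$, $A(Z_2)$, $A(Y_1)$, $A(Y_2)$. I would first iterate Lemma~\ref{1sum} (peeling off $Y_1$, then expanding the inner 1-sum with $Y_2$ at $b$) to obtain
\[
\phi(Z) \;=\; t\,\phi(Y_1)\phi(Y_2) \;-\; \phi(Y_1)\phi(Y_2\ba b) \;-\; \phi(Y_1\ba a)\phi(Y_2).
\]
Strong cospectrality of $a$ and $b$ forces cospectrality, so Theorem~\ref{thm:walkequiv} provides the walk-equivalence identity $\phi(Y_1\ba a)\phi(Y_2)=\phi(Y_1)\phi(Y_2\ba b)$, and substituting produces the twin factorisations
\[
\phi(Z) \;=\; \phi(Y_2)\bigl(t\,\phi(Y_1)-2\,\phi(Y_1\ba a)\bigr) \;=\; \phi(Y_1)\bigl(t\,\phi(Y_2)-2\,\phi(Y_2\ba b)\bigr).
\]
An analogous 1-sum/pendant computation on the graphs in Figures~\ref{figure3} and \ref{figure4} identifies $\phi(Z_1)=t\bigl(t\,\phi(Y_1)-2\,\phi(Y_1\ba a)\bigr)$ and $\phi(Z_2)=t\bigl(t\,\phi(Y_2)-2\,\phi(Y_2\ba b)\bigr)$, so modulo a spurious root at $0$ (whose eigenvector lives entirely on the decoration of $a$ and has zero $a$-entry) the nonzero roots of $\phi(Z_1)$ are exactly the roots of the ``plus factor'' of $\phi(Z)$, and walk equivalence makes the corresponding statement for $Z_2$.

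For the support condition in the $\sigma=+1$ list I would carry out an explicit eigenvector bijection modelled on the proof of Theorem~\ref{thm:walk-equiv-loops}. Given a $\theta$-eigenvector $f$ of $A(Z)$ with $f(a)=f(b)\neq 0$, the equation at the middle vertex $c$ of the path reads $\theta f(c)=2f(a)$, so $f(c)=2f(a)/\theta$ (for $\theta\neq 0$); restricting $f$ to $V(Y_1)$ and extending by the appropriate constant on the decoration of $a$ inside $Z_1$ yields a $\theta$-eigenvector $g$ of $A(Z_1)$ with $g(a)=f(a)\neq 0$. Conversely, given an eigenvector $g$ of $A(Z_1)$ with $g(a)\neq 0$, I would set $f|_{Y_1}=g|_{Y_1}$, $f(a)=f(b)=g(a)$, compute $f(c)$ from the equation at $c$, and recover $f|_{Y_2}$ by solving the inhomogeneous system $(\theta I - A(Y_2))f|_{Y_2} = f(c)\ket b$; by Equation~\eqref{eq:neutrino1} the required solvability with the right boundary value $f(b)=g(a)$ is equivalent to $\theta\,\phi(Y_2;\theta)-2\,\phi(Y_2\ba b;\theta)=0$, which is exactly what the analogous identity on the $Y_1$ side combined with walk equivalence provides. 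The $\sigma=-1$ list is simpler: the equation at $c$ collapses to $\theta f(c)=0$, so (apart from $\theta=0$) $f(c)=0$, and $f|_{Y_1}$ and $-f|_{Y_2}$ are genuine eigenvectors of $A(Y_1)$ and $A(Y_2)$ with matching nonzero entries at $a$ and $b$; walk equivalence ensures that the supports of $a$ in $Y_1$ and of $b$ in $Y_2$ coincide.

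Finally, for the eigenvalues of $A(Z)$ outside the support of $a$ and $b$ I would run the same analysis with $f(a)=f(b)=0$: the equation at $c$ gives $\theta f(c)=0$, so either $f(c)=0$ (and $f$ restricts to an eigenvector of $A(Y_1)$ or $A(Y_2)$ outside the relevant support) or $\theta=0$, which is the escape clause in the theorem corresponding to eigenvectors supported on $c$ together with the decorations inside $Z_1$ and $Z_2$. I expect the main technical obstacle to be the bookkeeping of the support conditions: matching the polynomials on their spectra is immediate from the factorisation, but upgrading the match to eigenvalues in the support of a specified vertex requires the residue interpretation of \eqref{eq:neutrino1} together with careful handling of the zero eigenvalue.
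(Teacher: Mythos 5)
Your outline follows the paper's proof quite closely: the same double application of the 1-sum lemma, the same use of walk equivalence to obtain the twin factorisations $\phi(Z) = \phi(Y_1)\bigl(t\phi(Y_2)-2\phi(Y_2\ba b)\bigr) = \phi(Y_2)\bigl(t\phi(Y_1)-2\phi(Y_1\ba a)\bigr)$, and the same eigenvector argument at the middle vertex $c$ for the direction from $Z$ to $Z_1$, $Y_1$. The one place you genuinely diverge is the converse: you propose an explicit eigenvector transplant, solving $(\theta I - A(Y_2))f|_{Y_2} = f(c)\ket b$ and checking the boundary value $f(b)=g(a)$, whereas the paper computes the residue of $\phi(Z\ba a)/\phi(Z)$ at $\theta$ directly via Equation \eqref{eq:neutrino1} and the factorisation $\phi(Z\ba a)=\phi(Y_1\ba a)(t\phi(Y_2)-\phi(Y_2\ba b))$. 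Your route reduces to the same polynomial identity in the generic case, but it is precisely in the non-generic case -- $\theta$ a common root of $\phi(Y_2)$ and $\phi(Y_2\ba b)$, where your linear system is singular and the solution is neither guaranteed nor unique -- that the real work lies. The paper resolves this with a multiplicity comparison (the support condition in $Z_2$ forces the multiplicity of $\theta$ in $\phi(Y_2)$ to exceed that in $\phi(Y_2\ba b)$ by one, which salvages the residue computation); you correctly identify this as ``the main technical obstacle'' but do not carry it out, so that step remains a gap in the proposal rather than a flaw in the strategy.

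One concrete error to fix: your formula $\phi(Z_1)=t\bigl(t\phi(Y_1)-2\phi(Y_1\ba a)\bigr)$ carries a spurious factor of $t$. The graph $Z_1$ used later in the paper (its quotient matrix has a single vertex $c$ joined to $a$ by an edge of weight $\sqrt2$, and the text speaks of the walk module generated by $\ket c$ in $Z_1$) satisfies $\phi(Z_1)=t\phi(Y_1)-2\phi(Y_1\ba a)$ with no extra factor, so the nonzero roots of $\phi(Z_1)$ already coincide with those of the ``plus factor'' without discarding a manufactured root at $0$. The eigenvalue $0$ still needs special treatment -- it can be a root of $t\phi(Y_1)-2\phi(Y_1\ba a)$ when $\phi(Y_1\ba a;0)=0$, and it is never in the support of $a$ in $Z_1$ because the equation at $c$ forces $g(a)=0$ when $\theta=0$ -- but the mechanism is not the one you describe.
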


\begin{proof}
	From applying the $1$-sum lemma (Lemma \ref{1sum}) twice, we get
	\[
		\ch{Z} = t \ch{Y_1}\ch{Y_2} - \ch{Y_2}\ch{Y_1 \ba a} - \ch{Y_1}\ch{Y_2 \ba b}.
	\]
	Thus, because $a$ and $b$ are walk equivalent (Theorem \ref{thm:walkequiv}),
	\begin{align}\label{eq:1}
		\ch{Z} = \ch{Y_1}(t \ch{Y_2} - 2 \ch{Y_2 \ba b}) = \ch{Y_2}(t \ch{Y_1} - 2 \ch{Y_1 \ba a}).
	\end{align}
	From this, it follows that eigenvalues of $Z$ are either eigenvalues of $Y_1$ or $Z_2$ (and equivalently either of $Y_2$ or $Z_1$). Let us now check the correspondence between the eigenvalue supports of $a$ and $b$.
	
	Assume $\theta$ is eigenvalue of $Z$ in the support of $a$, with $E_\theta \ket a = E_\theta \ket b$, and let $f : V(Z) \to \Rds$ be a corresponding eigenvector. Then
	\[
		\theta f(a) = \sum_{u \sim a} f(u) \implies \theta f(a) = \frac{f(a) + f(b)}{\theta} + \sum_{u \sim a,\ u \neq c} f(u)
	\]
	Then it is immediate to verify that $\theta$ is a root of $A(Z_1)$ in the support of $a$, as $f(a) = f(b)$, and also a root of $A(Z_2)$ in the support of $b$. Note that it also follows that $\theta \neq 0$.
	
	Likewise, if $\theta$ is eigenvalue of $Z$ with $E_\theta \ket a = - E_\theta \ket b \neq 0$, then any $\theta$ eigenvector sums to $0$ on the neighbours of $c$, and thus either $\theta$ is eigenvalue of both $Y_1$ and $Y_2$, or $\theta = 0$, but in this latter case \eqref{eq:1} implies that $\theta = 0$ is eigenvalue for $Y_1$ and $Y_2$.
		
	Finally, if 	$\theta$ is eigenvalue of $Z$ not in the support of $a$ and $b$, then it is an eigenvalue of both of the graphs $Y_1$ and $Z_1$ or of both of the graphs $Y_2$ and $Z_2$. 
	
	For the converse direction, first recall Equation \eqref{eq:neutrino1}. We note that an eigenvalue $\theta$ of $Z$ is in the support of $a$ if and only if
	\begin{align} \label{eq:2}
		\frac{\ch{Z \ba a}(t-\theta)}{\ch{Z}}\bigg|_{t= \theta} & = \frac{\ch{Y_1 \ba a} (t\ch{Y_2} - \ch{Y_2 \ba b}) (t-\theta)}{\ch{Y_1} (t\ch{Y_2} - 2 \ch{Y_2 \ba b})} \bigg|_{t= \theta} \notag \\ & = \frac{\ch{Y_2 \ba b} (t\ch{Y_2} - \ch{Y_2 \ba b}) (t-\theta)}{\ch{Y_2} (t\ch{Y_2} - 2 \ch{Y_2 \ba b})} \bigg|_{t= \theta} \neq 0. 
	\end{align}
	If $\theta$ is eigenvalue of $Z_2$ in the support of $b$, then
	\begin{align} \label{eq:3}
		\frac{t \ch{Y_2 \ba b} (t - \theta)}{t \ch{Y_2} - 2\ch{Y_2 \ba b}} \bigg|_{t = \theta} \neq 0,
	\end{align}
	but also recall that $\theta \neq 0$ and $(t \ch{Y_2} - 2\ch{Y_2 \ba b}) = 0$. If both terms are non-zero at $t = \theta$, then \eqref{eq:2} clearly holds. If $\ch{Y_2}(\theta) = \ch{Y_2 \ba b}(\theta) = 0$, then \eqref{eq:3} implies the multiplicity in $\ch{Y_2}$ is one larger than that in $\ch{Y_2 \ba b}$, and this ensures \eqref{eq:2} holds. Therefore, because $a$ and $b$ are strongly cospectral in $Z$, we have that $\theta$ is in the support of $b$ in $Z$. An analogous argument holds with the roles of $a$ and $b$ reversed.
	
	If $\theta$ is eigenvalue of $Y_2$ in the support of $b$, then
	\begin{align} \label{eq:4}
		\frac{\ch{Y_2 \ba b} (t - \theta)}{\ch{Y_2}} \bigg|_{t = \theta} \neq 0,
	\end{align}
	and interlacing implies that the multiplicity of $\theta$ in $\ch{Y_2 \ba b}$ in one unity smaller than in $\ch{Y_2}$. This gives \eqref{eq:2} immediately, and $\theta$ is in the support of $b$ in $Z$. An analogous argument holds with the roles of $a$ and $b$ reversed.
\end{proof}

\begin{theorem}
	Let $Z$ be as in Figure \ref{figure2}. If there is perfect state transfer between $a$ and $b$, then the graphs $Y_1$ and $Y_2$ have one vertex each.
\end{theorem}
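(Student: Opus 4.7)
My plan is to adapt the proof of Theorem~\ref{thm:nopstbridge} by replacing the loop-perturbed matrices with a bordered-matrix encoding of $A(Z_1)$ on a natural invariant subspace. By Theorem~\ref{thm:walk-equiv-wpath}, the support of $a$ in $Z$ splits as $\Phi^+ \sqcup \Phi^-$, where $\Phi^+$ are the eigenvalues of $A(Z_1)$ in the support of $a$ and $\Phi^-$ are the eigenvalues of $A(Y_1)$ in the support of $a$. Let $m = |\Phi^-|$ be the dimension of the walk module $W_Y$ of $\ket a$ in $Y_1$, and pick a Jacobi orthonormal basis $\ket a = \ket{v_0}, \ket{v_1}, \ldots, \ket{v_{m-1}}$ of $W_Y$; in it, $A(Y_1)$ restricts to an irreducible tridiagonal matrix $M$ with vanishing $(0,0)$-entry. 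Writing $c$ for the pendant vertex of $Z_1$ attached to $a$, the subspace $W_Y \oplus \mathbb{C}\ket c$ is $A(Z_1)$-invariant, and in the basis $\ket c, \ket{v_0}, \ldots, \ket{v_{m-1}}$ the restriction of $A(Z_1)$ is the bordered matrix
\[
\tilde N = \begin{pmatrix} 0 & \mathbf{e}_1^T \\ \mathbf{e}_1 & M \end{pmatrix},
\]
whose spectrum is $\Phi^+$, possibly together with a single extraneous $0$ whose eigenvector has vanishing $a$-component (indeed, row $1$ of the eigenvalue equation forces $v_2 = \theta v_1$, so any extra eigenvalue of $\tilde N$ outside $\Phi^+$ must equal $0$).

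Two trace identities fall out of the block form: $\tr\tilde N = \tr M$ (the border contributes a zero diagonal entry) and $\tr\tilde N^2 = \tr M^2 + 2$ (the $(1,1)$-entry of $\tilde N^2$ is $1$ and the lower-right block picks up an extra $\mathbf{e}_1\mathbf{e}_1^T$). Since the extraneous $0$ contributes nothing either way, these translate to
\[
\sum_{\theta \in \Phi^+}\theta \;=\; \sum_{\theta \in \Phi^-}\theta \qquad \text{and} \qquad \sum_{\theta \in \Phi^+}\theta^2 \;-\; \sum_{\theta \in \Phi^-}\theta^2 \;=\; 2.
\]
Cauchy interlacing (Theorem~\ref{thm:cauchy}) applied to $M$ as the principal submatrix of $\tilde N$ obtained by deleting the first row and column, together with irreducibility of the Jacobi matrices involved, yields the strict interleaving $\tilde\mu_1 < \lambda_1 < \tilde\mu_2 < \cdots < \lambda_m < \tilde\mu_{m+1}$; by Theorem~\ref{thm:pstcha}(b) any two distinct elements of $\Phi^+ \cup \Phi^-$ are at distance at least $1$.

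In the generic case $|\Phi^+| = m+1$, I parametrize $a_i := \lambda_i - \tilde\mu_i \ge 1$ and $b_i := \tilde\mu_{i+1} - \lambda_i \ge 1$ for $i = 1, \ldots, m$. The linear trace identity fixes $\tilde\mu_1 = -\sum_i b_i$, and expanding the quadratic identity simplifies to the combinatorial identity
\[
\sum_{\theta \in \Phi^+}\theta^2 - \sum_{\theta \in \Phi^-}\theta^2 \;=\; 2\sum_{1 \le i \le j \le m} a_i b_j.
\]
Setting this equal to $2$, with each $a_i, b_j \ge 1$ and $m(m+1)/2$ non-negative terms on the right, forces $m \le 1$. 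The auxiliary case where $\tilde N$ carries an extraneous $0$ is ruled out by a short case split on the position of this $0$ in the sorted interleaving: at either extremum the linear trace identity alone collapses into a sum of positive increments forced to vanish, and at each intermediate position the analogous sum-of-squares expansion produces a lower bound on its right-hand side strictly greater than $2$. Finally, $m=1$ gives $W_Y = \mathbb{C}\ket a$, forcing $A(Y_1)\ket a = 0$, so $a$ is isolated in $Y_1$ and $Y_1 = K_1$ by connectedness; a symmetric argument via $Y_2, Z_2$ gives $Y_2 = K_1$. I expect the hardest part to be executing the auxiliary case cleanly, although the same trace-identity framework drives every sub-case.
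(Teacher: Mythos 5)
Your overall strategy is genuinely different from the paper's (which pads $M$ to $M'$, perturbs by a rank-two border $E_0$, and uses Ky Fan's inequality to get only $m \le 2$, after which it needs quotient matrices and the extended-double-star Theorem~\ref{ext2star} to kill $m=2$). Your trace-identity route would, if correct, bypass Theorem~\ref{ext2star} entirely, which is appealing. However, there are two concrete gaps.

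First, the border weight in $\tilde N$ must be $\sqrt{2}$, not $1$. The vertex $c$ of $Z$ has \emph{two} neighbours $a$ and $b$, and the graph $Z_1$ of Theorem~\ref{thm:walk-equiv-wpath} folds this into a single pendant edge of weight $\sqrt{2}$: this is visible in the factorization $\phi(Z) = \phi(Y_2)\,(t\,\phi(Y_1) - 2\,\phi(Y_1\backslash a))$, since a weight-$w$ pendant contributes $t\,\phi(Y_1) - w^2\phi(Y_1\backslash a)$, and in the paper's own matrices ($E_0$ and $\overline{A(Z_1)}$ both carry $\sqrt{2}$). Consequently $\tr \tilde N^2 = \tr M^2 + 4$, and your key identity becomes $\sum_{i\le j} a_i b_j = 2$, not $1$. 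The sanity check is $Z = P_3$, where perfect state transfer \emph{does} occur: there $\Phi^+ = \{\pm\sqrt{2}\}$, $\Phi^- = \{0\}$, $a_1 = b_1 = \sqrt{2}$, and the difference of squared sums is $4$; your version of $\tilde N$ would predict eigenvalues $\pm 1$ for $A(Z_1)$, which is wrong. The generic case still closes ($m(m+1)/2 \le 2$ forces $m\le 1$), and the extremal extraneous-zero cases still die on the linear identity.

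Second, with the corrected constant the intermediate extraneous-zero case at $m=2$ is \emph{not} ruled out by your claimed strict inequality. With the zero at position $k=2$ one only gets
\[
a_1b_1 + a_1b_2 + a_2b_2 \;\ge\; b_1 + 1 + a_2 \;\ge\; 1 + (a_2+b_1) \;\ge\; 2,
\]
with equality attainable by admissible real data (e.g.\ $\lambda_{1,2} = \mp\tfrac12$, $\tilde\mu = -\tfrac32, 0, \tfrac32$, all gaps in $\Phi^+\cup\Phi^-$ equal to $1$ and $\Delta = 1$), so the bound is not ``strictly greater than $2$.'' You need one more ingredient to finish this sub-case; for instance, equality forces $a_1 = b_2 = 1$ and $a_2 + b_1 = 1$, whence $\det M = \lambda_1\lambda_2 = -a_2 b_1 \ge -\tfrac14$, contradicting $\det M = -\deg_{Y_1}(a) \le -1$. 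With that patch (and the weight fix) your argument appears to go through and is shorter than the paper's; as written, both steps are gaps.
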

\begin{proof}
	Assume $a$ and $b$ are strongly cospectral, and let $\Phi^\pm_{ab}$ be the eigenvalues $\theta$ in the support of these vertices so that $E_\theta \ket a = \pm E_\theta \ket b$.
		
	Let $M$ be a matrix that represents the action of $A(Y_1)$ in an orthogonal basis that contains $\ket a$ for the walk module generated by $\ket a$ in $\Rds^{V(Y_1)}$. If this module has dimension $m$, let $E_0$ be the $(m+1) \times (m+1)$ matrix with $0$s in all positions, except for its $(1,2)$ and $(2,1)$ entries, both equal to $\sqrt{2}$. Also, pad $M$ with a first row and first column both equal to $0$, call this $M'$. It is immediate to verify that $M' + E_0$ represents the action of $A(Z_1)$ in the walk module generated by $\ket c$ in $\Rds^{V(Z_1)}$. Note that the walk module generated by $\ket a$ is contained in this one, and they are different if and only if $0$ is an eigenvalue of $A(Z_1)$ in the support of $c$ but not in the support of $a$. Also note that $0$ is never an eigenvalue of $A(Z_1)$ in the support of $a$. As a consequence, the non-zero eigenvalues of $M' + E_0$ are precisely the eigenvalues of $A(Z)$ in $\Phi_{ab}^+$ (as per Theorem \ref{thm:walk-equiv-wpath}).
	
	From interlacing (Theorem \ref{thm:cauchy}), it follows that, for all $j$,
	\[
		\lambda^\downarrow_j(M' + E_0) \geq \lambda^\downarrow_j(M) \quad \text{and} \quad \lambda^\uparrow_j(M' + E_0) \leq \lambda^\uparrow_j(M).
	\]
	
	We consider then two cases below. For both, recall that Theorem \ref{thm:walk-equiv-wpath} establishes that the eigenvalues of $A(Z_1)$ in the support of $a$ and those of $A(Y_1)$ in the support of $a$ are the eigenvalues in the support of $a$ in $Z$, and from Theorem \ref{thm:pstcha}, item b, we have that distinct eigenvalues in this set differ by at least $1$. Also recall that eigenvalues of $(M'+E_0)$ and of $M$ are simple.
	
	\begin{enumerate}[(i)]
		\item $0$ is an eigenvalue of $M' + E_0$. In this case, assume $(M'+E_0)$ has two positive eigenvalues. Then $M$ has two non-negative eigenvalues (from interlacing), and therefore we can assume that $\lambda^\downarrow_1(M')$ and $\lambda^\downarrow_2(M')$ are eigenvalues of $M$. Thus, from interlacing, we have
	\[
    	\sum_{j = 1}^{2} \lambda^\downarrow_{j}(M'+E_0) \geq \sum_{j = 1}^{2} (\lambda^\downarrow_{j}(M) + 1) > \sum_{j = 1}^{2} \lambda^\downarrow_{j}(M) + \sqrt{2} = 
                \sum_{j = 1}^{2} \lambda^\downarrow_{j}(M') + \sum_{j = 1}^{2} \lambda^\downarrow_{j}(E_0),
    \]
	which contradicts Theorem \ref{thm:kyfan}. A similar argument also shows that $(M'+E_0)$ does not have at least two negative eigenvalues.
	
		\item $0$ is not an eigenvalue of $M' + E_0$. In this case, assume $M$ has at least two non-negative eigenvalues, and, thus, from interlacing, $M' + E_0$ has two positive eigenvalues. An argument similar to the one above arrives at a contradiction. Thus in this case, $M$ can only have one non-negative eigenvalue and one non-positive eigenvalue.
	\end{enumerate}
	
	In summary, either $0$ is an eigenvalue of $M' + E_0$ and $M' + E_0$ has at most three distinct eigenvalues, or $0$ is not an eigenvalue of $M' + E_0$ and $M$ has at most two distinct eigenvalues. In either case, we conclude that there at most two distinct eigenvalues in the support of $a$ either in $Z_1$ or in $Y_1$ respectively, and therefore $a$ must be a neighbour to all vertices in $Y_1$. 
	
	For the first case, there exists an eigenbasis of $A(Z_1)$ such that $|V(Z_1)| - 2$ of the vectors $\ket x$ in this basis are such that $\braket{a}{x} = 0$.  It follows that these vectors sum to $0$ in the neighbourhood of $a$, and therefore $\braket{x}{\1'} = 0$, where $\1'$ has all entries equal to $1$ but for the entry corresponding to $c$, which is equal to $\sqrt{2}$. The restriction of these vectors to $Z_1 \ba a$ are eigenvectors of $Z_1 \ba a$, thus the remaining eigenvector of $Z_1\ba a$ is $\mathbf{1}'$, and this immediately implies that $Z_1 \ba a$ is regular of degree $0$.
	
	For the second case, a similar argument to the one above (also similar to the argument in the proof of Theorem \ref{thm:nopstbridge}) shows that $Y_1 \ba a$ is regular of degree $k$ (we cannot immediately give that $k = 0$, but this is the case, as we show below).
	
	Let $\theta^+, \theta^-$ be the two eigenvalues in the support of $a$ in $A(Y_1)$, and let $\lambda^+$, $\lambda^0$ and $\lambda^-$ be the eigenvalues in the support of $a$ in $A(Z_1)$.	It follows that if $n = |V(Y_1 \ba a)|$, then $\theta^+, \theta^-$ are eigenvalues of the quotient matrix
    	
    \[
    \ov{A(Y_1)} = \begin{bmatrix} 0 & \sqrt{n} \\ \sqrt{n} & k \end{bmatrix}.
    \]
    	
    and $\lambda^+$, $\lambda^0$ and $\lambda^-$ are eigenvalues of the quotient matrix
    	
    \[
    \ov{A(Z_1)} = \begin{bmatrix} 0 & \sqrt{2} & 0 \\ \sqrt{2} & 0 & \sqrt{n} \\ 0 & \sqrt{n} & k \end{bmatrix}.
    \]
    
    It follows from Theorem \ref{thm:weyl} that
    \begin{align*}
    \lambda^+ &\leq \theta^+ + \sqrt{2}, \text { and}\\
    \lambda^- &\geq \theta^- - \sqrt{2}.
    \end{align*}
    
    From interlacing and from Theorem \ref{thm:pstcha}, we know that ${\lambda^+ > \theta^+ > \lambda^0 > \theta^- > \lambda^-}$, and each inequality holds by least a multiple of $\sqrt{\Delta}$. Thus $\Delta \in \{1, 2\}$, and
    
    \begin{align*}
    \lambda^+ & = \theta^+ + \sqrt{\Delta} , \text{ and }\\
    \lambda^- & = \theta^- - \sqrt{\Delta}.
    \end{align*}
    
    Calculating the trace of both matrices, we get that
    \begin{align*}
    \theta^+ + \theta^- & = k , \text{ and }\\
    \lambda^+ + \lambda^0 + \lambda^- = \theta^+ + \sqrt{\Delta} + \lambda^0 + \theta^- - \sqrt{\Delta} & = k.
    \end{align*}
    
    Thus $\lambda^0 = 0$, but the free term of the characteristic polynomial of $\ov{A(Z_1)}$ is $-2k$, thus $0$ is an eigenvalue if and only if $k = 0$, therefore $Y_1 \ba a = \overline{K_n}$.
    
    If $Y_1 \ba a$ and $Y_2 \ba b$ are non-empty, then $Z$ is an extended double star, and these do not admit perfect state transfer according to Theorem~\ref{ext2star}.

	The only case left is when $Y_1 = \{a\}$ and $Y_2 = \{b\}$, as we wanted.
\end{proof}

\section{Conclusion} \label{sec:final}

One main motivation of this paper is Conjecture 1 in \cite{CoutinhoLiu2} that proposes that $P_2$ and $P_3$ are the only trees admitting perfect state transfer. We were able to show in this paper that if perfect state transfer happens between $a$ and $b$ in the graph $Z$ (as in Figure \ref{figure1}) for when $X = P_2, P_3$, then $Z = P_2, P_3$ respectively. Note that extending this result to show a no-go theorem for perfect state transfer between a vertex in $Y_1$ to a vertex in $Y_2$ would imply the no state transfer in trees conjecture. We are not ready to state this extension as a conjecture, but we list it as an open problem.

\begin{problem}
	Consider $Z$ as in Figure \ref{figure1}, have $X = P_2$, and assume $Y_1$ and $Y_2$ have at least two vertices. Find an example of such $Z$ admitting perfect state transfer between a vertex in $Y_1$ to a vertex in $Y_2$, or show that none exists.
\end{problem}

Another natural extension of our work in this paper consists in determining for which other graphs $X$ an analogous result holds. We believe that the result is true for when $X$ is a longer path, but a naive attempt in finding a inductive proof did not succeed. We now assume the graph $Z$ looks like the figure below.
\begin{figure}[h!]
    \centering
	\includegraphics[scale=0.5]{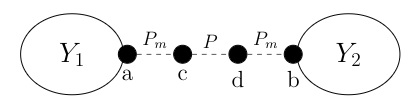}
    \caption{Graph $Z$} \label{figure6}
\end{figure}

We can show that if $a$ and $b$ are strongly cospectral, then so are $c$ and $d$, but we cannot guarantee that if perfect state transfer occurs between $a$ and $b$, then it also does between $c$ and $d$, because these latter vertices could have other eigenvalues in their support which are not in the supports of $a$ and $b$. 

An alternative approach could be to generalize the application of the 1-sum lemma to this case, but this does not seem too promising.

\begin{conjecture}
	Consider $Z$ as in Figure \ref{figure6}. Perfect state transfer does not occur between $a$ and $b$.
\end{conjecture}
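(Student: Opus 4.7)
The plan is to extend the fold-and-compare strategy used in the proofs of Theorems \ref{thm:walk-equiv-loops} and \ref{thm:walk-equiv-wpath} to $X = P_k$ for every $k \geq 4$. Because the path $X$ has a reflection symmetry exchanging $a$ and $b$, any eigenvector of $A(Z)$ witnessing strong cospectrality splits according to the sign $\sigma \in \{+1,-1\}$ in $E_\theta \ket{a} = \sigma E_\theta \ket{b}$. Folding $Z$ along this axis should produce, for each sign, a (possibly edge-weighted, possibly loop-decorated) auxiliary graph $Z^\sigma$ built from $Y_1$ by attaching half of $X$. Explicitly, for even $k = 2\ell$ both $Z^\pm$ should be $Y_1$ with an attached $P_\ell$ carrying a $\pm 1$ loop at the tip; for odd $k = 2\ell+1$, $Z^+$ should be $Y_1$ with an attached path of $\ell$ new vertices whose final edge is weighted $\sqrt{2}$, and $Z^-$ should be $Y_1$ with an attached $P_{\ell-1}$, since the middle vertex of $X$ is forced to vanish in the antisymmetric sector. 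Theorems \ref{thm:walk-equiv-loops} and \ref{thm:walk-equiv-wpath} are the $k=2$ and $k=3$ instances of this picture.

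The concrete steps I would carry out are: (i) prove the generalized walk-equivalence statement identifying eigenvalues of $A(Z)$ in the $\sigma$-support of $a$ with eigenvalues of $Z^\sigma$ in the support of $a$, by iterating Lemma \ref{1sum} along $X$ and tracking the resulting Chebyshev-like polynomial factors; (ii) represent the walk module of $\ket{a}$ in each $Z^\sigma$ by a symmetric matrix $M^\sigma$, noting that $M^+$ and $M^-$ act on the same underlying space and differ only by a low-rank, small-norm perturbation supported on the attached half-path; (iii) combine Cauchy interlacing (Theorem \ref{thm:cauchy}), Weyl's inequalities (Theorem \ref{thm:weyl}), and the Ky Fan trace inequality (Theorem \ref{thm:kyfan}) with the minimum-gap condition $\sqrt{\Delta} \geq 1$ from Theorem \ref{thm:pstcha}(b) to bound the dimension of the walk module; (iv) use that bound, together with an equitable-partition argument as in the proofs of Theorems \ref{thm:nopstbridge} and its $P_3$ analogue, to force $Y_1\ba a$ and $Y_2\ba b$ to be cocliques, and then finish by Theorems \ref{2star}, \ref{ext2star}, or a direct analysis of the long-path analogue.

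The main obstacle, already flagged in the discussion after the conjecture, is that although $\|M^+ - M^-\|$ is bounded independently of $k$, the number of eigenvalues that must be pairwise separated by $\sqrt{\Delta}$ grows with $k$ through the Chebyshev-type contributions of the attached half-path, so the naive Ky Fan bound weakens at exactly the rate at which it would need to tighten. I expect this is where a new idea is required. The most promising avenue is to exploit the explicit Chebyshev-type factorization one gets from iterating Lemma \ref{1sum}: $\phi(Z^\pm)$ should be expressible as a combination of $\phi(Y_1)$ and $\phi(Y_1 \ba a)$ against Chebyshev polynomials $U_\ell$ and $T_\ell$, and the simultaneous arithmetic constraints of Theorem \ref{thm:pstcha}(b) imposed on both spectra should be satisfiable only when $\phi(Y_1)$ is linear, i.e.\ $Y_1 = K_1$.

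A complementary angle worth pursuing is the authors' observation that strong cospectrality propagates from $(a,b)$ to the inner pair $(c,d)$. The extra eigenvalues that lie in the support of $c$ but not of $a$ are exactly the obstruction to a straight induction; however, these extras are eigenvalues of $A(Z_1^c)$, where $Z_1^c$ is obtained by deleting $a$, and they are constrained by interlacing against the spectra of $Y_1$ and $Y_1 \ba a$. Tracking these as an auxiliary invariant may allow one to either (a) show the extras are empty whenever PST between $a,b$ occurs, collapsing the conjecture to the shorter-path case by induction, or (b) derive a separate arithmetic contradiction from the extras together with Theorem \ref{thm:pstcha}(b). Either of these, combined with the fold-and-compare machinery above, should suffice to close the conjecture.
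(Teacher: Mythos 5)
This statement is stated in the paper as an open conjecture: the authors offer no proof of it, and explicitly report that the naive inductive approach fails. Your submission is likewise not a proof but a research program, and you say so yourself (``I expect this is where a new idea is required''), so it cannot be accepted as establishing the conjecture. To be concrete about where the gap sits: your step (i) (the generalized walk-equivalence identifying the $\pm$ supports of $a$ in $Z$ with the supports of $a$ in the folded graphs $Z^\sigma$) is plausible and the factorization of $\phi(Z)$ does generalize, but it is asserted rather than proved, and the exceptional eigenvalues of the interior path (the roots of the Chebyshev-type factors, playing the role that $\theta=0$ plays in Theorem \ref{thm:walk-equiv-wpath}) grow in number with $k$ and each requires the multiplicity bookkeeping of Equations \eqref{eq:2}--\eqref{eq:4}. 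The fatal gap is step (iii): for $P_2$ and $P_3$ the two folded operators differ by a fixed-trace, low-rank perturbation while carrying only $m$ and $m+1$ support eigenvalues, so a single trace/Ky Fan inequality forces $m\le 2$; for $k\ge 5$ the two folds act on walk modules of different dimensions ($m+\lceil k/2\rceil-1$ versus $m+\lfloor k/2\rfloor -1$, roughly), the perturbation connecting them is no longer a one-shot rank-one shift, and the separation-by-$\sqrt{\Delta}$ budget from Theorem \ref{thm:pstcha}(b) must now be distributed over a number of eigenvalues growing with $k$ while the trace discrepancy stays bounded. You correctly diagnose that the inequality weakens at exactly the rate it would need to tighten, but diagnosing the obstruction is not overcoming it.

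Your complementary angle --- propagating strong cospectrality inward to the pair $(c,d)$ and inducting on the path length --- is precisely the approach the authors describe and reject in the discussion surrounding the conjecture: the inner vertices can acquire eigenvalues in their supports that are absent from the supports of $a$ and $b$, so perfect state transfer between $a$ and $b$ need not imply perfect state transfer between $c$ and $d$, and the induction does not close. Tracking those extra eigenvalues as an ``auxiliary invariant'' is a reasonable suggestion, but until you either show they are empty under the PST hypothesis or extract an arithmetic contradiction from them via Theorem \ref{thm:pstcha}, the conjecture remains exactly as open as the paper leaves it.
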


A third and last problem we propose is that of characterizing when cut vertices are strongly cospectral. We have shown in Theorem \ref{thm:walkequiv} that if $a$ and $b$ are cospectral in $X$, they are cospectral in $Z$ depending only on the graphs $Y_1$ and $Y_2$, and Theorem \ref{thm:path} shows a condition for this cospectrality to become strong. This leads to two problems:

\begin{problem}
	Consider $Z$ as in Figure \ref{figure1}, $a$ and $b$ cospectral in both $X$ and $Z$. 	What (natural) condition on the graph $X$ is equivalent to $a$ and $b$ becoming strongly cospectral in $Z$? Theorem \ref{thm:path} shows that $X$ itself being a path is sufficient, but this is certainly not necessary. We warn though that $a$ and $b$ being strongly cospectral in $X$ or for it to be a unique path between $a$ and $b$ are both not enough conditions.    
\end{problem}

\begin{problem}
	Find a general construction of graphs as in Figure \ref{figure1} so that $a$ and $b$ are strongly cospectral in $Z$ but not even cospectral in $X$. We have at least one example, but we do not know how to generalize it. 
\end{problem}

\section*{Acknowledgements}

E. Juliano acknowledges grant PROBIC/FAPEMIG. C. Godsil gratefully ac-
knowledges the support of the Natural Sciences and Engineering Council of Canada (NSERC), Grant No.RGPIN-9439. C.M. van Bommel acknowledges PIMS Postdoctoral Fellowship.

\printbibliography


\end{document}